\newtheorem{theorem}{Theorem}[section]
\newtheorem{prop}[theorem]{Proposition}
\newtheorem{cor}[theorem]{Corollary}
\newtheorem{lemma}[theorem]{Lemma}
\newtheorem{algorithm}[theorem]{Algorithm}
\newtheorem{remark}[theorem]{Remark}
\newtheorem{define}[theorem]{Definition}
\newtheorem{example}[theorem]{Example}
\newtheorem{problem}[theorem]{Problem}
\newcommand{\ord}{\mbox{\rm ord}}
\definecolor{highlight}{rgb}{.5,0,.5}
\newcommand{\zero}{{\mbox{\rm Zero}}}
\newcommand{\gal}{{\mbox{\rm Gal}}}
\def\L{\hbox{\bf L}}
\def\ring{{k[Y, 1/\det(Y)]}}
\def\GLbQ{{\GL_n(\overline{\bQ})}}
\def\diag{\hbox{\rm diag}}
\def\GL{{\rm GL}}
\def\stab{{\rm stab}}
\def\Zero{{\rm Zero}}
\def\P{{\mathcal P}}
\def\L{{\mathcal L}}
\def\H{{\mathcal H}}
\def\X{{\mathcal X}}
\def\bZ {{\mathbb{Z}}}
\def\bS{{\mathbb{S}}}
\def\bP{{\mathbf{P}}}
\def\bQ{{\mathbb{Q}}}
\def\barQ{{\overline{\mathbb{Q}}}}
\def\calZ{{\mathcal Z}}
\def\calS{{\cal S}}
\def\bfalpha{{\bm\alpha}}
\def\bfa{{\mathbf a}}
\def\bfb{{\mathbf b}}
\def\bfc{{\mathbf c}}
\def\bfv{{\mathbf v}}
\def\bfw{{\mathbf w}}
\def\bfm{{\mathbf m}}
\def\bfh{{\mathbf h}}
\def\bfZ{{\mathbf Z}}
\newcommand{\rmnum}[1]{\rm\romannumeral #1}
\newcommand{\Rmnum}[1]{\expandafter\@slowromancap\romannumeral #1@}
\begin{document}

\title{On the Computation of\\ the Galois Group of Linear Difference Equations}
\author{Ruyong Feng\footnote{ryfeng@amss.ac.cn. This work is partially supported by a National Key Basic Research Project of China (2011CB302400) and
by a grant from NSFC (60821002).} \\ KLMM, AMSS, Chinese Academy of Sciences, \\Beijing 100190, China
}
\date{} \maketitle

\begin{abstract} We present an algorithm that determines the Galois group of linear difference equations with rational function coefficients.   \end{abstract}


\section{Introduction}
\label{sec-preliminary}
The current algorithms for computing the Galois group of linear difference equations were only valid for the equations of special types, such as the second order equations, the equations of diagonal form or with constant coefficients and so on. In \cite{hendriks}, a difference analogue of Kovacic's algorithm was developed for linear difference equations of order two.
In \cite{put-singer}, algorithms for linear difference equations of diagonal form were developed. For linear difference equations with constant coefficients, an algorithm can be found in \cite{singer2}, where the author further showed that there is a recursive procedure that derives the Galois group from the ideal of algebraic relations among solutions, and vice versa. In \cite{maier}, Maier gave upper and lower bounds for the Galois groups of Frobenius difference equations over $(\mathbb{F}_q(s,t),\phi_q)$, where $\phi_q(s)=s^q$ and $\phi_q(a)=a$ for all $a\in \mathbb{F}_q(t)$. On the contrary, algorithms for computing the Galois groups of linear differential equations have been well-developed (see \cite{compoint-singer, kovacic,singer-ulmer,hrushovski}). Particularly, in \cite{hrushovski}, Hrushovski developed an algorithm that calculates the Galois groups of all linear differential equations with rational function coefficients. His algorithm involved many arguments from logical language and has recently been reworked by Rettstadt in \cite{rettstadt} and by the author in \cite{feng}. Here, in this paper, we develop an algorithm for computing the Galois group of linear difference equations with rational function coefficients of arbitrary order. Our algorithm can be considered as a difference analogue of Hrushovski's algorithm.

The philosophy of computing the Galois groups of linear difference equations is quite similar to that of linear differential equations.
The Galois groups of these two kinds of equations are linear algebraic groups over the field of constants. Hence bounds for the defining equations of linear algebraic groups developed for the differential case can be applied to the difference case without any modification. However, there exist some results in differential algebra whose difference analogues are not correct any more, and vice versa. For example, associated primes of a radical differential ideal are again differential ideals, while those of a radical $\sigma$-ideal need not be $\sigma$-ideals but $\sigma^\delta$-ideals for some integer $\delta$. This forces us to consider $\sigma^\delta$-ideals. Another example is that the Picard-Vessiot extension ring for linear differential equations is not necessarily the coordinate ring of a trivial torsor for the Galois group, while that for linear difference equations is the coordinate ring of a trivial torsor. This implies that one only needs to consider objects such as hypergeometric elements that are defined over the basic field.

Throughout this paper, $k$ stands for the field of rational functions in $x$ with coefficients in $\overline{\bQ}$, the algebraic closure of the field of rational numbers, and $\bar{k}$ stands for its algebraic closure. The difference field which we are interested in is the field $k$ with an automorphism $\sigma$ given by $\sigma(x)=x+1$ and $\sigma(c)=c$ for $c\in \overline{\bQ}$. Consider the following linear difference equations
\begin{equation}
\label{EQ:differenceeqns}
 \sigma(Y)=AY
\end{equation}
where $Y$ is an $n\times 1$ vector with indeterminate entries and $A\in \GL_n(k)$. Let $R$ be the Picard-Vessiot extension ring of $k$ for (\ref{EQ:differenceeqns}). The Galois group of (\ref{EQ:differenceeqns}) over $k$, denoted by $\gal(R/k)$, is defined to be the set of $\sigma$-$k$-automorphisms of $R$, i.e. $k$-automorphisms of $R$ that commute with $\sigma$. Let $F$ be a fundamental matrix of (\ref{EQ:differenceeqns}) with entries in $R$, i.e. $F\in \GL_n(R)$ satisfying $\sigma(F)=AF$. Then for any $\phi\in\gal(R/k)$, $\phi(F)$ is another fundamental matrix of (\ref{EQ:differenceeqns}). Thus there exists $[\phi]\in \GL_n(\barQ)$ such that $\phi(F)=F[\phi]$. The map given by $\phi\rightarrow [\phi]$ is a group homomorphism of $\gal(R/k)$ into $\GL_n(\barQ)$. Denote by $G$ the set $\{[\phi]\,\,|\,\,\phi\in\gal(R/k)\}$. It was proved in (Theorem 1.13, page 11 of \cite{put-singer}) that $G$ is a linear algebraic group defined over $\barQ$. The reader is referred to Chapter 1 of \cite{put-singer} for more information about the Galois theory of linear difference equations.

The group $G$ can be reformulated as the stabilizer of some ideal in a $\sigma$-ring, which we describe below. Let $Y$ denote an $n\times n$ matrix $(y_{i,j})$, where the $y_{i,j}$ are indeterminates. Sometimes, in brief, we also consider $Y$ as a set of indeterminates. By setting $\sigma(Y)=AY$, one can extend $\sigma$ from $k$ to $\ring$ so that it becomes a difference extension ring of $k$. The results in Section 1.1 of \cite{put-singer} imply that $R$ is isomorphic to $\ring/I$ for some maximal $\sigma$-ideal $I$. Define an action of $\GLbQ$ on $\ring$ given by $g\cdot Y=Yg$ for all $g\in \GLbQ$. Suppose that $J$ is an ideal of $\ring$.
The {\it stabilizer} of $J$, denoted by $\stab(J)$, is defined as
$$
   \stab(J)=\{g\in \GLbQ\,\,|\,\, P(Yg)\in J,\,\,\forall\,\,P\in J\},
$$
which is an algebraic subgroup of $\GLbQ$. Set
 $$
   I_F=\left\{P\in \ring\,\,|\,\,P(F)=0\right\}.
 $$
 Then $I_F$ is a maximal $\sigma$-ideal and $G=\stab(I_F)$. By the uniqueness of the Picard-Vessiot extension ring of $k$ for (\ref{EQ:differenceeqns}), one sees that for any maximal $\sigma$-ideal $I$ of $\ring$, there is $g\in \GL_n(\barQ)$ such that $g\cdot I=I_F$.
 From this, one can readily verify the stabilizers of maximal $\sigma$-ideals in $\ring$ are conjugated. In other words, as linear algebraic groups, these stabilizers are isomorphic. Therefore we shall also call the stabilizer of a maximal $\sigma$-ideal of $\ring$ the Galois group of (\ref{EQ:differenceeqns}) over $k$.
Using the Gr\"{o}bner base method, one can obtain the defining equations of $\stab(I)$ easily once a Gr\"{o}bner basis of $I$ is known.
Therefore, the above definition indicates that finding a maximal $\sigma$-ideal of $\ring$ will suffice to determine the Galois group. We shall give in this paper an algorithm that computes a maximal $\sigma$-ideal of $\ring$.

The rest of the paper is organized as follows. In Section 2, we introduce some basic results that provide the theoretical background of our algorithm. Meanwhile, we introduce some basic definitions such as proto-groups, proto-maximal $\sigma$-ideals and so on. In Section 3, we show how to compute a proto-maximal $\sigma$-ideal. In Section 4, we describe a method to extend a proto-maximal $\sigma$-ideal to a maximal $\sigma^\delta$-ideal so that one can easily obtain a maximal $\sigma$-ideal by taking the intersection of ideals. In Section 5, the methods developed in the previous sections are summarized as an algorithm, and an example is presented to illustrate the algorithm. In Appendix A, we describe a method to find coefficient bounds for generators of a proto-maximal $\sigma$-ideal. In Appendix B, an algorithm for computing $\sigma^\delta$-hypergeometric elements in $\ring/I_{\rm irr}$ is developed, where $I_{\rm irr}$ is a prime $\sigma^\delta$-ideal.
\section{Some basic results}
In this section, we shall introduce some basic results about proto-groups, $k$-torsors and several related problems whose algorithmic solutions will be needed in our algorithm.
\subsection{Proto-groups}
As in the differential case, bounds on algebraic subgroups of $\GL_n(\barQ)$ play a central role in the main algorithm presented in this paper. Let $H$ be an algebraic subgroup of $\GLbQ$. For the ease of notation, we shall use $H(k)$ (resp. $H(\bar{k})$) to denote $k$-points (resp. $\bar{k}$-points) of $H$.
We shall say $H$ is bounded by a positive integer $d$ if there is a set $\bS\subseteq \barQ[Y]$ such that $H$ is the set of zeroes of $\bS$ in $\GLbQ$ and elements of $\bS$ are of degree not greater than $d$. In brief, $H_u$ stands for the algebraic subgroup of $H$ generated by unipotent elements and $H^\circ$ denotes the identity component of $H$.
\begin{define}
   Let $G, H$ be two algebraic subgroups of $\GLbQ$. $H$ is said to be a proto-group of $G$ if they satisfy the following condition
   $$
       H_u \leq G^\circ \leq G \leq H .
   $$
   In the case that $G$ is the Galois group of (\ref{EQ:differenceeqns}) over $k$, $H$ is called a proto-Galois group of (\ref{EQ:differenceeqns}).
\end{define}
\begin{remark}
   \label{RM:protogroup}
       Suppose that $H$ is a proto-group of $G$ and $\bar{H}$ is an algebraic subgroup satisfying $G\leq \bar{H} \leq H$.
      Since $\bar{H}_u \leq H_u$, one sees that $\bar{H}$ is also a proto-group of $G$.
   \end{remark}
For the convenience, we introduce the following definition.
\begin{define}
  A $\sigma$-ideal $I$ in $\ring$ is called {\it proto-maximal} if $\stab(I)$ is a proto-Galois group of (\ref{EQ:differenceeqns}).
\end{define}
The key point of Hrushovski's algorithm is the following proposition, which is also the core of our algorithm.
\begin{prop}
\label{PROP:bounds}(Corollary 3.7 of \cite{hrushovski}, Corollary B.15 of \cite{feng})
 One can find an integer $\tilde{d}$ only depending on $n$ such that for any algebraic subgroup $G$ of $\GLbQ$, there is a proto-group of $G$ bounded by $\tilde{d}$. Particularly, given linear differential equations, there exists a proto-Galois group of it bounded by the integer $\tilde{d}$.
\end{prop}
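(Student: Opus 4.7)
The statement is invoked from Hrushovski's Corollary 3.7 and its reworking in the author's earlier paper \cite{feng}, so my proof proposal amounts to sketching the strategy those sources employ. The goal is, given any algebraic subgroup $G \leq \GLbQ$, to build an algebraic overgroup $H$ of degree bounded by an integer $\tilde d$ depending only on $n$ such that all unipotent elements of $H$ already lie in $G^\circ$. Three structural bounds depending only on $n$ drive the construction: (i) reductive subgroups of $\GL_n(\barQ)$ fall into finitely many conjugacy classes of bounded complexity, a consequence of complete reducibility and the discreteness of root data; (ii) every torus of $\GL_n$ embeds in a maximal torus of $\GL_n$, whose degree depends only on $n$; and (iii) Jordan's theorem bounds the index of a normal abelian subgroup of any finite subgroup of $\GL_n(\barQ)$ by a constant $J(n)$.

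The plan is to write the Levi decomposition $G^\circ = R_u(G^\circ) \rtimes L$, where $R_u(G^\circ)$ is the unipotent radical and $L$ a reductive Levi factor. By (i), $L$ can be enclosed in a bounded reductive subgroup $\tilde L$ lying in the same conjugacy class, and I would arrange the enlargement so that the semisimple commutator subgroup of $\tilde L$ matches that of $L$ while the extension occurs only along the connected center $Z(L)^\circ$, which is in turn absorbed into a bounded ambient torus via (ii). Combining $\tilde L$ with $R_u(G^\circ)$ and covering the component group $G/G^\circ$ by a bounded finite extension via (iii) produces the candidate $H$. To verify the proto-group condition, note that every unipotent element of an algebraic group is contained in a connected one-parameter algebraic subgroup, hence lies in the identity component, so $H_u = (H^\circ)_u$; the enlargement from $G^\circ$ to $H^\circ$ takes place only along tori, which contain no nontrivial unipotent elements, so $(H^\circ)_u = (G^\circ)_u \leq G^\circ$. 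The containment $G \leq H$ is built in.

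The main obstacle is the enlargement of the Levi factor. A reductive group is generated by its unipotent root subgroups, so any expansion of the semisimple part of $L$ would create unipotent elements outside $G^\circ$ and violate the proto-group condition. The technical heart of the argument is therefore to show that within the bounded family of conjugacy classes of reductive subgroups one can always select a representative that encloses $L$ but expands only its central torus. The numerical bound $\tilde d$ then emerges as a combination of Jordan's constant $J(n)$, the uniform complexity of the finitely many reductive conjugacy classes, and the degree of a maximal torus of $\GL_n$.
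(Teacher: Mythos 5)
The paper offers no proof of this proposition; it simply cites \cite{hrushovski} (Corollary 3.7) and \cite{feng} (Corollary B.15) and then records the explicit value of $\tilde d$ from the latter. So the comparison here is to the argument in those references rather than to anything in this paper.

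Your sketch identifies the right cast of characters (Jordan's theorem, bounded complexity of semisimple subgroups, the fact that tori carry no unipotents), but item (i) as you state it is false, and the gap is load-bearing. Reductive subgroups of $\GL_n(\barQ)$ do \emph{not} fall into finitely many conjugacy classes, nor do they admit a uniform degree bound: already in $\GL_2$ the one-dimensional tori $\{\mathrm{diag}(t,t^k)\}$ for $k=1,2,3,\dots$ are pairwise non-conjugate and require defining equations of degree $k$. What is true, by Richardson's rigidity theorem, is that \emph{semisimple} (equivalently, unipotent-generated) subgroups fall into finitely many conjugacy classes, and a maximal torus of $\GL_n$ is bounded; the whole difficulty of the proposition is that the central torus of the Levi factor is the one piece that is not bounded, and the proto-group condition is engineered precisely to tolerate enlargement along that torus. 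You acknowledge this as ``the main obstacle,'' but the sketch does not actually resolve it -- in particular, you do not address why $\tilde L = L' \cdot T$ normalizes $R_u(G^\circ)$, nor why the finite lift of the component group (controlled by Jordan) normalizes the enlarged connected group $H^\circ$, both of which require work. Finally, the proof in \cite{hrushovski,feng} is not a one-shot Levi-decomposition construction: it proceeds by iteratively refining a descending chain of bounded overgroups of $G$, with the Jordan bound controlling the length of the chain -- this is visible in the shape of the explicit bound $\tilde d = \kappa_3^{I(n)-1}$, where the Jordan number $I(n)$ sits in the exponent counting iteration steps. A direct construction like yours would produce a bound of a rather different shape, so while your ingredients overlap with the sources, the route is genuinely different and, as written, incomplete.
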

The integer $\tilde{d}$ can be explicitly given as follows (see Corollary B.15 of \cite{feng} for details).
Set
\begin{equation}
\label{EQ:bounds1}
   \kappa_1=\max_i\left\{{{n^2+(2n)^{3\cdot8^{n^2}} \choose n^2} \choose i}^2\right\},\quad \quad
        \kappa_2=\kappa_1(2n)^{3\cdot8^{n^2}}{n^2+(2n)^{3\cdot8^{n^2}} \choose n^2}
\end{equation}
and
$$
     \kappa_3=\kappa_2(\kappa_1^2+1)\max_i\left\{{\kappa_1^2+1 \choose i}\right\},\quad I(n)=J\left(\max_i\left\{{n^2+1 \choose i}^2\right\}\right)
$$
where $J(m)$ is a Jordan bound, which is not greater than $\left(\sqrt{8m}+1\right)^{2m^2}-\left(\sqrt{8m}-1\right)^{2m^2}$. Then
\begin{equation}
\label{EQ:bounds2}
   \tilde{d}=(\kappa_3)^{I(n)-1}.
\end{equation}

It is well-known in the theory of linear algebraic groups that any algebraic subgroup of a diagonalizable group $D$ is the intersection of kernels of some characters of $D$ (see Proposition in the page 103 of \cite{humphreys}). Given a connected algebraic group $H$, the following proposition describes algebraic subgroups that are the intersections of some characters of $H$.
\begin{prop}
\label{PROP:kernels}
  Suppose that $H$ is a connected algebraic subgroup of $\GLbQ$. Then $G$ is the intersection of kernels of some characters of $H$ if and only if $H$ is a proto-group of $G$.
\end{prop}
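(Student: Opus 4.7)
The plan is to reduce the statement to the cited Humphreys result (every algebraic subgroup of a diagonalizable group is the intersection of kernels of characters) by passing to the quotient $H/H_u$. For the forward direction, suppose $G=\bigcap_i \ker(\chi_i)$ for some characters $\chi_i\colon H\to\GX_m$. Then $G\leq H$ is immediate. Since $\GX_m$ contains no non-trivial unipotent element, $\chi_i(u)=1$ for every unipotent $u\in H$, so $H_u\leq G$. The set of unipotent elements of $H$ is irreducible (hence connected) and contains the identity, so the subgroup it generates is connected; thus $H_u\leq G^\circ$, and $H$ is a proto-group of $G$.

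For the reverse direction, suppose $H_u\leq G^\circ\leq G\leq H$. The plan is to establish the structural claim that $H/H_u$ is a torus, and then to apply the cited Humphreys result to the algebraic subgroup $G/H_u\leq H/H_u$: this gives characters $\bar\chi_1,\ldots,\bar\chi_m$ of $H/H_u$ whose kernels intersect in $G/H_u$. Composing with the projection $\pi\colon H\to H/H_u$ yields characters $\chi_i=\bar\chi_i\circ\pi$ of $H$, and since taking preimages commutes with intersections, $\bigcap_i \ker(\chi_i)=\pi^{-1}(G/H_u)=G$, as required.

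The main obstacle is therefore showing that $H/H_u$ is a torus. I would first verify that $H_u$ is a closed, connected, normal subgroup of $H$: normality holds because conjugation preserves unipotency, while closedness and connectedness follow from the standard fact that the subgroup generated by a connected closed subvariety through the identity is itself closed and connected. Since the unipotent radical $R_u(H)$ consists of unipotent elements, $R_u(H)\leq H_u$, so $H/H_u$ is a quotient of the reductive group $H/R_u(H)$, hence itself reductive and connected. It then suffices to show $H/H_u$ has no non-trivial unipotent element, because a connected reductive group whose only unipotent element is the identity must have trivial derived subgroup and hence be a torus. For the no-unipotent claim, if $\bar u\in H/H_u$ is unipotent and $u\in H$ is a preimage with Jordan decomposition $u=u_su_u$, then $u_u\in H_u$ forces $\bar u=\overline{u_s}$, which is semisimple; being simultaneously semisimple and unipotent, $\bar u=1$.
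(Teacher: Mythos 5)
Your proof is correct, and its high-level strategy coincides with the paper's: both directions are handled by reducing to the cited Humphreys result on diagonalizable groups via a quotient that kills $H_u$. The difference is in how the key structural fact is obtained. The paper introduces the map $\psi\colon H\to(\barQ^\times)^\ell$ determined by generators $\chi_1,\ldots,\chi_\ell$ of $X(H)$, applies Humphreys to $\psi(G)\leq\psi(H)$, and pulls back along $\psi$; the one nontrivial ingredient, $H_u=\ker\psi$, it simply cites (Humphreys, Lemma B.9). You instead form the abstract quotient $\pi\colon H\to H/H_u$ and re-derive that $H/H_u$ is a torus from the structure theory of reductive groups (normality of $H_u$, the inclusion $R_u(H)\leq H_u$, Jordan decomposition to rule out unipotents in the quotient, and the classification of connected reductive groups without unipotents), then apply Humphreys to $G/H_u\leq H/H_u$ and pull back along $\pi$. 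Since $\psi(H)\cong H/H_u$, the two quotients are the same object; you have effectively re-proved the lemma the paper cites. Similarly, for the ``kernels $\Rightarrow$ proto-group'' direction, the paper again invokes $H_u=\ker\psi\subseteq G$, while you argue directly that each character kills unipotents because $\GX_m$ has no nontrivial unipotent elements --- again the same observation, derived rather than cited. One small remark: the connectedness of $H_u$ can also be seen without Steinberg's theorem on the irreducibility of the unipotent variety, since over $\barQ$ each unipotent $u$ lies in a connected one-parameter unipotent subgroup through the identity, and a subgroup generated by such subgroups is automatically closed and connected; this would lighten your appeal to structure theory slightly.
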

\begin{proof}
  Assume that $H$ is a proto-group of $G$. Let $\chi_1,\cdots,\chi_\ell$ be generators of $X(H)$, the group of characters of $H$. Define a map $\psi: H \rightarrow (\barQ^{\times})^\ell$ given by $\psi(h)=(\chi_1(h),\cdots, \chi_\ell(h))$, where $\barQ^{\times}$ denotes the multiplicative group of $\barQ$. Then $\psi(H)$ is a diagonalizable group and $\psi(G)$ is one of its algebraic subgroups. Due to Proposition in the page 103 of \cite{humphreys}, $\psi(G)$ is the intersection of kernels of some characters of $\psi(H)$. Denote these characters by $\bar{\chi}_1,\cdots,\bar{\chi}_l$. Notice that $\psi$ induces a group homomorphism
  \begin{align*}
      \psi^*: X\left((\barQ^{\times})^\ell\right) &\rightarrow X(H)\\
      \chi&\rightarrow \chi\circ\psi.
  \end{align*}
  We claim that
  $G=\cap_{i=1}^l \ker\left(\psi^*(\bar{\chi}_i)\right)$. Obviously, $G\subseteq \cap_{i=1}^l \ker\left(\psi^*(\bar{\chi}_i)\right)$. Suppose that $h\in \cap_{i=1}^l \ker\left(\psi^*(\bar{\chi}_i)\right)$. Then $\bar{\chi_i}(\psi(h))=1$ for all $1\leq i\leq l$. This implies that $\psi(h)\in \psi(G)$. Lemma B.9 of \cite{humphreys} states that $H_u=\ker(\psi)$. Hence $\ker(\psi)\subseteq G$ and then $h\in G$.

  Conversely, $G$ is the intersection of some characters of $H$. Then $H_u=\ker(\psi)\subseteq G$. Since $H_u$ is connected, $H_u\subseteq G^\circ$. Thus $H$ is a proto-group of $G$.
\end{proof}
The connection between proto-groups and $\sigma$-ideals in $\ring$ is the geometric objects so called $k$-torsors, which are introduced in the next section.

\subsection{$k$-Torsors}
We shall use $\Zero(J)$ to denote the set of zeroes of $J$ in $\GL_n(\bar{k})$, where $J$ is a subset of $\ring$. Suppose that $Z\subseteq \GL_n(\bar{k})$ is a variety defined over $k$. We shall use $I_k(Z)$ to denote the set of all polynomials in $\ring$ that vanish on $Z$.
\begin{define}(see Definition 3.13 of \cite{singer2})
 Let $Z\subseteq \GL_n(\bar{k})$ be a variety defined over $k$ and $H$ an algebraic subgroup of $\GL_n(\bar{k})$ defined over $k$. $Z$ is said to be a $k$-torsor for $H$ if for any $z_1,z_2\in Z$, there is a unique $h\in H$ such that $z_1=z_2h$. A $k$-torsor $Z$ for $H$ is said to be trivial if $Z\cap \GL_n(k)\neq \emptyset$, i.e. $Z=BH$ for some $B\in \GL_n(k)$.
\end{define}
Let $I$ be a maximal $\sigma$-ideal of $\ring$. Then one has that
\begin{prop}
\label{PROP:torsors}(Proposition 1.20, page 15 of \cite{put-singer})
  $\Zero(I)$ is a trivial $k$-torsor for $\stab(I)$.
\end{prop}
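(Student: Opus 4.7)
The plan is to reduce to a canonical choice of $I$ and then invoke the Picard-Vessiot torsor structure, with triviality coming from the special features of the difference setting.

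First I would note that the statement is equivariant under the $\GLbQ$-action: if $\Zero(I)$ is a trivial $k$-torsor for $\stab(I)$ and $g \in \GLbQ$, then $\Zero(g \cdot I) = \Zero(I)\, g^{-1}$ is a trivial $k$-torsor for $\stab(g \cdot I) = g^{-1} \stab(I)\, g$. Since every maximal $\sigma$-ideal of $\ring$ is a $\GLbQ$-translate of $I_F$ (as recalled earlier in the paper), it suffices to treat $I = I_F$, where $F \in \GL_n(R)$ is the fundamental matrix used to define $I_F$.

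Next I would establish the torsor structure on $\Zero(I_F)$. The surjection $\ring \twoheadrightarrow R$, $Y \mapsto F$, identifies $\ring / I_F$ with $R$, so that any $z \in \Zero(I_F)$ corresponds to a $k$-algebra homomorphism $\rho_z : R \to \bar{k}$ with $\rho_z(F) = z$. Closure of $\Zero(I_F)$ under $\stab(I_F)$ is immediate, since for $h \in \stab(I_F)$ and $P \in I_F$ one has $P(zh) = (P(Yh))(z) = 0$ because $P(Yh) \in I_F$. For freeness and transitivity I would invoke the standard Picard-Vessiot isomorphism $R \otimes_k R \cong R \otimes_{\barQ} \barQ[G]$ with $G = \stab(I_F)$. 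Geometrically this asserts that the action map $\Zero(I_F) \times G(\bar{k}) \to \Zero(I_F) \times \Zero(I_F)$, $(z,h) \mapsto (z, zh)$, is a bijection, which is exactly the defining torsor property.

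Finally, for triviality I would invoke the feature of the difference setting highlighted in the introduction: $R$ is the coordinate ring of a trivial $G$-torsor over $k$. Concretely, there exists a $k$-algebra homomorphism $\rho : R \to k$; applying it to $F$ entrywise yields $B := \rho(F) \in \GL_n(k)$ (using that $\det(F)$ is a unit in $R$), and $B \in \Zero(I_F)$ since $P(B) = \rho(P(F)) = 0$ for every $P \in I_F$. Thus $\Zero(I_F) = B \cdot G(\bar{k})$ and the torsor is trivial. The main obstacle in a from-scratch proof lies precisely in this last step: the torsor structure itself is formal and works identically in the differential setting, whereas triviality is genuinely specific to difference Picard-Vessiot theory and must be extracted from the structural description of simple $\sigma$-rings in Chapter~1 of \cite{put-singer}.
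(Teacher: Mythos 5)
The paper supplies no proof of this proposition: it is cited verbatim as Proposition 1.20, page 15 of \cite{put-singer} and used throughout as a black box. There is therefore no in-paper argument to compare your sketch against, and I will assess it on its own.

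The reduction to $I=I_F$ and the description of the torsor structure are essentially sound, up to a sign slip: with the action $g\cdot Y=Yg$ one finds $\stab(g\cdot I)=g\,\stab(I)\,g^{-1}$, not $g^{-1}\stab(I)\,g$ (the conclusion that stabilizers of translated ideals are conjugate is unaffected, so this is harmless). The genuine gap is the triviality step. You assert the existence of a $k$-algebra homomorphism $\rho\colon R\to k$ by ``invoking the feature of the difference setting'' that $R$ is the coordinate ring of a trivial $G$-torsor over $k$. But that is \emph{exactly} the statement being proved: by definition, triviality of the $k$-torsor $\Zero(I_F)$ means $\Zero(I_F)\cap\GL_n(k)\neq\emptyset$, and a $k$-point of $\Zero(I_F)$ is the same thing as a $k$-algebra homomorphism $\ring/I_F=R\to k$. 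So your final paragraph restates the conclusion as a premise rather than deriving it. You are candid that the difficulty lies precisely here and ``must be extracted from the structural description of simple $\sigma$-rings in Chapter~1 of \cite{put-singer},'' but a correct proof must actually perform that extraction --- roughly, use the decomposition of a simple $\sigma$-ring as $R\cong R_0\oplus\cdots\oplus R_{t-1}$ with $\sigma$ permuting the domains $R_i$ cyclically, together with $(\ring/I)\cap\bar{k}=k$ (Lemma~1.19 of \cite{put-singer}), to exhibit a $k$-rational point --- rather than appeal to the fact by name. As written the sketch is circular at the decisive step.
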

Suppose that $H$ is a connected algebraic subgroup of $\GL_n(\bar{k})$, which is defined over $\barQ$, and $Z$ is a trivial $k$-torsor for $H$.
Then for any $B\in Z\cap \GL_n(k)$, the map given by
\begin{align}
 \ring/I_k(H) &\rightarrow \ring/I_k(Z)\label{EQ:map}\\
 P(Y)& \rightarrow P(B^{-1}Y)\notag
\end{align}
is an isomorphism of $k$-algebras.
A theorem of Rosenlicht (\cite{magid,rosenlicht,singer}) implies that invertible regular functions on $Z$  are closely related to characters of $H$. This theorem states: let $H$ be a connected linear algebraic group defined over an algebraically closed field $\bar{k}$ and $y$ be a regular function on $H$ with $1/y$ also a regular function, then $y$ is a $\bar{k}$ multiple of a character.
Notice that characters of $H$ can be viewed as elements in $\barQ[Y,1/\det(Y)]/I_\barQ(H)$.
\begin{lemma}
\label{LM:invertibleelements} Suppose that $J$ is a prime $\sigma$-ideal of $\ring$ and $\Zero(J)$ is a trivial $k$-torsor for $H$. Let $B\in \Zero(J)\cap \GL_n(k)$. If $\chi$ is a character of $H$, then $\chi(B^{-1}Y)$ is invertible in $\ring/J$. Conversely, if $P$ is an invertible element in $\ring/J$, then $P=r \chi(B^{-1}Y)$ for some $r\in k$ and some character $\chi$ of $H$.
\end{lemma}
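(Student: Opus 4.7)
The plan is to transport the statement to the coordinate ring of $H$ via the isomorphism (\ref{EQ:map}) and then invoke Rosenlicht's theorem. Since $J$ is prime and $\ring$ is a polynomial localization over a characteristic zero field, the Nullstellensatz yields $I_k(\Zero(J))=J$. Thus (\ref{EQ:map}) is an explicit $k$-algebra isomorphism $\ring/I_k(H)\to \ring/J$ sending $Q(Y)$ to $Q(B^{-1}Y)$, with inverse $P(Y)\mapsto P(BY)$.

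For the forward direction, any character $\chi$ of $H$ is a morphism of algebraic groups $H\to \GL_1$ defined over $\barQ$, so it is already an invertible element of $\barQ[Y,1/\det(Y)]/I_\barQ(H)$. Extending scalars to $k$ and pushing through the above isomorphism shows that $\chi(B^{-1}Y)$ is invertible in $\ring/J$.

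For the converse, given $P$ invertible in $\ring/J$, set $Q(Y):=P(BY)$, which is invertible in $\ring/I_k(H)$. Base-changing to $\bar{k}$ makes $Q$ an invertible regular function on the connected algebraic group $H_{\bar{k}}$, so Rosenlicht's theorem applies and yields $Q=r\chi$ for some $r\in \bar{k}^\times$ and some character $\chi$ of $H_{\bar{k}}$; because $H$ is defined over $\barQ$, this $\chi$ can be regarded as an element of $X(H)$.

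The step I expect to be most delicate is descending the scalar $r$ from $\bar{k}$ back to $k$. For this I would use the natural $\gal(\bar{k}/k)$-action on $\bar{k}[Y,1/\det(Y)]/I_{\bar{k}}(H)$: the element $Q$ is defined over $k$ and hence fixed by $\gal(\bar{k}/k)$, while $\chi$ has $\barQ$-rational coefficients and so is also fixed. Since $H$ is connected, $\chi$ is a nonzerodivisor in the coordinate ring, so comparing $\tau(r)\chi=\tau(Q)=Q=r\chi$ forces $\tau(r)=r$ for every $\tau\in\gal(\bar{k}/k)$, i.e.\ $r\in k$. Applying the isomorphism back to $Q=r\chi$ then produces the required equality $P=r\chi(B^{-1}Y)$.
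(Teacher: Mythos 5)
Your proof has the same skeleton as the paper's: transport to $\ring/I_k(H)$ via the isomorphism (\ref{EQ:map}), apply Rosenlicht's theorem over $\bar{k}$ to get $P(BY)=r\chi$ with $r\in\bar{k}$ and $\chi\in X(H)$, and then descend $r$ to $k$. Where you differ is in the descent step, and this is a genuine alternative. The paper observes that $\ring/J$ is a $\sigma$-extension ring of $k$, invokes Lemma~1.19 of van der Put--Singer to conclude $(\ring/J)\cap\bar{k}=k$ and hence $(\ring/I_k(H))\cap\bar{k}=k$, and reads off $r\in k$ since $r=P(BY)\,\chi^{-1}$ lies in that intersection. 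You instead use plain Galois descent: $Q=P(BY)$ and $\chi$ are both fixed by $\gal(\bar{k}/k)$, and $\chi$ is a unit, so $\tau(r)=r$ for all $\tau$. Your route is self-contained and does not use the $\sigma$-structure at all, which is arguably cleaner; the paper's route is shorter because it outsources the work to a cited Picard--Vessiot lemma, and that same lemma gets reused elsewhere in the development. One small point worth being explicit about in your version: the identification $I_k(\Zero(J))=J$ you invoke at the outset requires $J$ prime (given) plus the fact that, in characteristic zero, $J\bar{k}[Y,1/\det(Y)]$ is still radical, so that $I_{\bar{k}}(\Zero(J))\cap k[Y,1/\det(Y)]=J$; the paper leaves this implicit too, but since you foreground the Nullstellensatz you should note the geometric-reducedness step. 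Otherwise both proofs are correct.
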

\begin{proof}
We only need to prove the second assertion. Since $\barQ$ is algebraically closed, $H$ viewed as a linear algebraic group defined over $\bar{k}$ is still connected. The map (\ref{EQ:map}) implies that $P(BY)$ is invertible in $\ring/I_k(H)$. Applying the above theorem of Rosenlicht to $P(BY)$, one has that $P(BY)=r\chi$ for some $r\in \bar{k}$ and some character $\chi$. Observe that $\ring/J$ is a $\sigma$-extension ring of $k$. Due to Lemma 1.19 in the page 15 of \cite{put-singer}, $(\ring/J)\cap \bar{k}=k$. Hence $(\ring/I_k(H))\cap\bar{k}=k$. We then conclude that $r\in k$ and $P=r\chi(B^{-1}Y)$.
\end{proof}
In Section 4, one will see that invertible elements of $\ring/J$ are actually $\sigma$-hypergeometric over $k$. In the case that $J$ is a proto-maximal $\sigma$-ideal, algebraic relations among these $\sigma$-hypergeometric elements will reveal the characters of $H$ that determine the Galois group $G$.
\subsection{Some related problems}
In this paper, we shall need the algorithmic solutions of the following problems.
\begin{itemize}
  \item [$(P1)$]Given an ideal in $k[Y]$, compute a Gr\"{o}bner basis of it with respect to some
       monomial ordering. The reader is referred to Section 2.7 of \cite{cox-little-oshea} and Section 5.5 of \cite{becker-Weispfenning} for the algorithms.
  \item [$(P2)$] Given an unmixed ideal in k[Y], compute its radical and its associated primes. There are several methods for this problem, for instance the methods presented in \cite{gianni-trager-zacharias}, Section 4 of \cite{eisenbud-huneke-vasconcelos}, Section 8.7 of \cite{becker-Weispfenning}, parts 36 and 42 of \cite{seidenberg}.
  \item [$(P3)$] Compute the Galois group of linear difference equations of diagonal form. Equivalently, given $b_1,\cdots, b_\ell \in k$, compute a set of generators of the following $\bZ$-module:
      $$
         \left\{ (z_1,\cdots, z_\ell)\in \bZ^{\ell}\,\,\left|\,\, \exists\,\,f\in k\,\,\mbox{s.t.}\,\,\prod_{i=1}^\ell b_i^{z_i}=\frac{\sigma(f)}{f}\right.\right\}.
      $$
      When $k=\bQ(x)$, a method was described in Section 2.2 of \cite{put-singer}. Using the results in Section 3.2 of \cite{derksen-jeandel-koiran}, one can adapt the method in \cite{put-singer} to solve the problem with $k=\barQ(x)$.
      This problem is the bottleneck in extending our algorithm to the equations over a larger basic field.
  \item [$(P4)$] Give linear difference equations with coefficients in $k$, compute all hypergeometric solutions. The reader is referred to (\cite{cluzeau-van-hoeij, petkosevek}) for algorithms.
\end{itemize}
\section{The computation of a proto-maximal $\sigma$-ideal}
Let $F$ be a fundamental matrix of (\ref{EQ:differenceeqns}) and let $d$ be a positive integer or $\infty$. Denote
\begin{equation}
\label{EQ:definable}
  I_{F,d}=\langle\{ P(Y)\in k[Y]_{\leq d}\,\,|\,\, P(F)=0\}\rangle,
\end{equation}
where $k[Y]_{\leq d}$ denotes the set of polynomials in $k[Y]$ with degrees not greater than $d$, and $\langle * \rangle$ denotes the ideal in $\ring$ generated by $*$.
When $d=\infty$, $I_{F,d}$ is equal to $I_{F}$ that is defined in Introduction.
One can readily verify that $I_{F,d}$ is a $\sigma$-ideal and furthermore $I_F$ is a maximal $\sigma$-ideal. The fact that $\ring$ is a noetherian ring implies that for sufficiently large $d$, $I_{F,d}$ is a proto-maximal $\sigma$-ideal. Therefore to achieve a proto-maximal $\sigma$-ideal, one only needs to solve the following two problems: $(a)$ Given an integer $d$, how to compute $I_{F,d}$? $(b)$ When is the integer $d$ large enough such that $I_{F,d}$ is proto-maximal?

\subsection{The computation of $I_{F,d}$}
\label{SUBSEC:computation}
In \cite{kauers-zimmermann},  Kauers and Zimmerman presented an algorithm for computing generators for the ideal of algebraic relations
among solutions of linear difference equations with constant coefficients. Their algorithm relies on the fact that one can explicitly write down solutions of the equations of such type.  Here, our task is different. We only compute the ideal generated by algebraic relations with bounded degree, while we are interested in linear difference equations with coefficients in $k$.

We first show that which fundamental matrix $F$ we take in this section. Let $\calS_{\barQ}$ be the difference ring of germs at infinity of $\barQ$ (see Example 1.3 in the page 4 of \cite{put-singer} for the definiton). Let $\rho$ be a nonnegative integer such that $i$ is not a pole of entries of $A$ and $\det(A(i))\neq 0$ if $i\geq \rho$ and $Z_\rho\in \GL_n(\barQ)$. Define an element of $\GL_n(\calS_{\barQ})$, say $\bfZ=(Z_0,Z_1,\cdots)$, as follows: $Z_i=0$ for $0\leq i\leq \rho-1$ and $Z_{i+1}=A(i)Z_i$ for $i\geq \rho$. Define a map
$$
\psi: \ring \rightarrow \calS_{\barQ}
$$
as follows:
$$
  \mbox{for $f\in k$}, \psi(f)=(0,\cdots,0,f(i),f(i+1),\cdots,)\,\,\mbox{and}\,\,\psi(Y)=\bfZ
$$
where $i$ is a nonnegative integer such that $j$ is not a pole of $f$ if $j\geq i$. Proposition 4.1 in the page 45 of \cite{put-singer} states that $\psi$ induces an embedding of $\ring/I$ into $\calS_{\barQ}$, where $I=\ker(\psi)$ that is a maximal $\sigma$-ideal. Let $F$ be the image of $Y$ in $\ring/I$. From this construction, we have that $I_{F,d}=I_{\bfZ,d}$.

The results in Appendix A imply that one can compute an integer $\ell$ such that $I_{F,d}$ has a set of generators consisting of polynomials in $\barQ[x][Y]$ whose degrees in $x$ are not greater than $\ell$. Let $N={d+n^2 \choose d}-1$ and $\bfm_0,\cdots, \bfm_N$ be all elements in $\bZ^{n^2}_{\geq 0}$ with $|\bfm_i|\leq d$. Write $P=\sum c_{j(\ell+1)+i} x^iY^{\bfm_j}$ for $P\in I_{F,d}$, where $Y^{\bfm_i}=\prod y_{j,l}^{m_{i,j,l}}$ with $\bfm_i=(m_{i,j,l})$. We can then reduce the original problem to the following problem: find a basis of the vector space
$$
  U=\left\{\left.(c_0,c_1,\cdots,c_{(N+1)(\ell+1)-1})\in \barQ^{(N+1)(\ell+1)}\,\,\right|\,\, \sum_{i=0}^{\ell}\sum_{j=0}^N c_{j(\ell+1)+i} x^i F^{\bfm_j}=0\right\}.
$$
We are going to solve the latter problem. Observe that $\sigma(x^i F^{\bfm_j})$ is a $k$-linear combination of the monomials $F^{\bfm_0},\cdots, x^i F^{\bfm_j},\cdots, x^\ell F^{\bfm_N}$. Hence there is a nonzero linear difference operator $L$ in $\barQ[x][\partial]$ such that $L(x^i F^{\bfm_j})=0$ for all $i$ with $0\leq i \leq \ell$ and $0\leq j \leq N$. This operator $L$ can be computed using the equation (\ref{EQ:differenceeqns}). Notice that at present, we do not know the ideal $I$ and thus do not know $F$. Fortunately, one can easily compute the sequence solution $\bfZ$, which can be considered as a difference analogue of formal power series solutions of linear differential equations,.

For the convenience, write (\ref{EQ:differenceeqns}) and $L$ in the form of linear recurrence equations
\begin{equation}
\label{EQ:recurrenceeqns1}
   Y_{m+1}=A(m)Y_m, m\geq \rho
\end{equation}
and
\begin{equation}
\label{EQ:recurrenceeqns2}
   L=a_l(m)y_{m+l}+a_{l-1}(m)y_{m+l-1}+\cdots+a_0(m)y_m,\,\,m\geq \nu
\end{equation}
where $\rho$ is a positive integer such that $i$ is not a pole of entries of $A(x)$ and $\det(A(i))\neq 0$ for all $i\geq \rho$, and $
\nu$ is an integer greater than integer roots of $a_l(x)a_0(x)=0$. One easily sees that
\begin{lemma}
\label{LM:recurrence}
Assume that $\{s_\nu, s_{\nu+1},\cdots,\}$ is a solution of (\ref{EQ:recurrenceeqns2}). If
there is a nonnegative integer $j$ such that $s_{\nu+j}=\cdots=s_{\nu+l-1+j}=0$, then $s_i=0$ for all $i\geq \nu$.
\end{lemma}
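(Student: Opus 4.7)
The plan is to exploit the fact that $\nu$ was chosen precisely so that $a_l(m)\ne 0$ and $a_0(m)\ne 0$ for every integer $m\ge \nu$. This lets us solve the recurrence (\ref{EQ:recurrenceeqns2}) in both directions on the range $m\ge \nu$, and thus propagate a block of $l$ consecutive zeros forward and backward to show all $s_i$ vanish.

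First I would handle the forward direction. Since $a_l(m)\ne 0$ for $m\ge \nu$, the recurrence gives
$$
 s_{m+l}=-\frac{1}{a_l(m)}\bigl(a_{l-1}(m)s_{m+l-1}+\cdots+a_0(m)s_m\bigr),\qquad m\ge \nu.
$$
Starting from the hypothesis $s_{\nu+j}=\cdots=s_{\nu+l-1+j}=0$, applying this relation at $m=\nu+j$ yields $s_{\nu+l+j}=0$, and an immediate induction on $m$ gives $s_i=0$ for every $i\ge \nu+j$.

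Next I would handle the backward direction. Because $a_0(m)\ne 0$ for $m\ge \nu$, the same recurrence can be rewritten as
$$
 s_{m}=-\frac{1}{a_0(m)}\bigl(a_l(m)s_{m+l}+\cdots+a_1(m)s_{m+1}\bigr),\qquad m\ge \nu.
$$
Whenever $l$ consecutive values $s_{m+1},\ldots,s_{m+l}$ vanish for some $m\ge \nu$, this formula forces $s_m=0$ as well. Applying this to the block of zeros $s_{\nu+j},\ldots,s_{\nu+l-1+j}$ (at $m=\nu+j-1$, provided $j\ge 1$) gives $s_{\nu+j-1}=0$, producing a new block of $l$ consecutive zeros shifted one step to the left; a finite downward induction on $j$ therefore yields $s_\nu=\cdots=s_{\nu+l-1+j}=0$.

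Combining the two inductions covers all $i\ge \nu$. The only delicate point, and really the sole thing to verify carefully, is that the backward step never escapes the range $m\ge \nu$ where $a_0$ is guaranteed to be nonzero; this is immediate from the choice of $\nu$. There is no deeper obstacle, the result is essentially a bookkeeping consequence of the invertibility of the leading and trailing coefficients of $L$ for $m\ge \nu$.
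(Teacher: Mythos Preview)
Your proof is correct and is exactly the intended argument: the paper does not even write out a proof of this lemma, introducing it with ``One easily sees that'', and your forward/backward propagation using the nonvanishing of $a_l(m)$ and $a_0(m)$ for $m\ge\nu$ is precisely what that phrase is pointing to.
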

 Let $\kappa$ be an integer greater than $\rho$ and $\nu$.
Notice that the sequence $\{Z_{\rho},Z_{\rho+1},\cdots,\}$ is a solution of (\ref{EQ:recurrenceeqns1}) and for all $0\leq i\leq \ell$ and $0\leq j \leq N$, the sequence $\{\kappa^i Z_\kappa^{\bfm_j}, (\kappa+1)^i Z_{\kappa+1}^{\bfm_j},\cdots,\}$ is a solution of (\ref{EQ:recurrenceeqns2}). Set
$$P_{\bfc}(x,Y)=\sum_{i=0}^\ell \sum_{j=0}^N c_{j(\ell+1)+i} x^i Y^{\bfm_j}, \mbox{where}\,\,\bfc=(c_0,\cdots,c_{(N+1)(\ell+1)-1})\in \barQ^{(N+1)(\ell+1)}.$$
Then the sequence $\{P_\bfc(\kappa,Z_\kappa),(P_\bfc(\kappa+1,Z_{\kappa+1}),\cdots,\}$ is also a solution of (\ref{EQ:recurrenceeqns2}).
\begin{prop}
\label{PROP:recurrence}
  $\bfc\in U$ if and only if $P_\bfc(i, Z_i)=0$ for all $\kappa\leq i \leq \kappa+l-1$.
\end{prop}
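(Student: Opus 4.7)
My approach is to translate the condition $\bfc\in U$ through the embedding $\psi$ into a statement about the sequence $\{P_\bfc(m,Z_m)\}_m$, and then invoke Lemma \ref{LM:recurrence}.

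First, $\bfc\in U$ means precisely that $P_\bfc(x,F)=0$ in $\ring/I$, i.e.\ $P_\bfc(x,Y)\in I=\ker\psi$. Using that $\psi$ is a ring homomorphism with $\psi(Y)=\bfZ$ and with $\psi$ acting on $k$ by evaluation at integers beyond the poles, I compute that for all $m$ sufficiently large the $m$-th component of $\psi(P_\bfc(x,Y))$ is exactly $P_\bfc(m,Z_m)$. Since the zero element of $\calS_{\barQ}$ is represented by any eventually-zero sequence, this gives
\[
\bfc\in U \iff P_\bfc(m,Z_m)=0 \text{ for all sufficiently large } m.
\]

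It remains to show that this eventual vanishing is equivalent to vanishing on the finite block $\{\kappa,\kappa+1,\ldots,\kappa+l-1\}$. The discussion preceding the proposition already records that $\{P_\bfc(m,Z_m)\}_{m\geq\kappa}$ is a solution of the recurrence (\ref{EQ:recurrenceeqns2}); and since $a_0(m)\neq 0$ for $m\geq\nu$ and $\kappa>\nu$, this solution extends uniquely backward via the recurrence to a solution $\{\tilde s_m\}_{m\geq\nu}$ with $\tilde s_m=P_\bfc(m,Z_m)$ for $m\geq\kappa$. Lemma \ref{LM:recurrence} then asserts that $\tilde s$ is identically zero on $[\nu,\infty)$ as soon as it contains $l$ consecutive zeros anywhere, and this yields both implications at once: if $P_\bfc(m,Z_m)$ vanishes on $[\kappa,\kappa+l-1]$ it vanishes on all of $[\kappa,\infty)$ and a fortiori eventually; and conversely, if the sequence is eventually zero it has $l$ consecutive zeros at large indices, so $\tilde s$ is zero on $[\nu,\infty)$ and in particular $P_\bfc(m,Z_m)=0$ on $[\kappa,\kappa+l-1]$.

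The argument is essentially bookkeeping, so there is no single hard step; the only point requiring care is the identification of the sequence representing $\psi(P_\bfc(x,Y))$ in $\calS_{\barQ}$, which follows from the multiplicativity of $\psi$ together with the explicit formulas for $\psi$ on $k$ and on the matrix entries $Y$. Everything else is a direct application of Lemma \ref{LM:recurrence} to the extended sequence $\tilde s$.
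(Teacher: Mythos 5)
Your proof is correct and follows essentially the same path as the paper's: translate $\bfc\in U$ through the embedding $\psi$ into eventual vanishing of the sequence $\{P_\bfc(m,Z_m)\}$, and then invoke Lemma~\ref{LM:recurrence} in both directions. The one place you are more careful than the paper is the indexing: the lemma is stated for sequences beginning at $\nu$, while the sequence the paper manipulates begins at $\kappa>\nu$, and the paper applies the lemma without comment; you close this small gap by extending the sequence uniquely backward to $\nu$ (using $a_0(m)\neq 0$ for $m\geq\nu$) before applying the lemma literally. This is a legitimate bit of bookkeeping that the paper leaves implicit, and the rest of your argument (identifying $\psi(P_\bfc(x,Y))$ with the sequence $(P_\bfc(m,Z_m))_m$ via multiplicativity of $\psi$) matches the intended reasoning.
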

\begin{proof}
 Assume that $\bfc\in U$. Then $P_\bfc(x, F)=0$ and thus $\psi(P_\bfc(x,F))=0$. In other words, there is a positive integer $j$ such that $P_\bfc(i,Z_i)=0$ for all $i\geq j$. Lemma~\ref{LM:recurrence} implies that $P_\bfc(i,Z_i)=0$ for all $\kappa \leq i \leq \kappa+l-1$. Conversely, suppose that $P_\bfc(i,Z_i)=0$ for all $\kappa \leq i \leq \kappa+l-1$. By Lemma~\ref{LM:recurrence} again, $P_\bfc(i,Z_i)=0$ for all $i\geq \kappa$. This implies that $\psi(P_\bfc(x,F))=0$. Equivalently, $P_\bfc(x,F)=0$. Hence $\bfc\in U$.
\end{proof}
The conditions $P_\bfc(i,Z_i)=0$ for all $\kappa\leq i \leq \kappa+l-1$ induce a linear system for $\bfc$. Solving this system, we obtain a basis of $U$.
\begin{algorithm}
\label{ALG:partialrelations}
  Compute a basis of $I_{F,d}$.
  \begin{itemize}
     \item [$(\rmnum{1})$] Using the results in Appendix A, compute an integer $\ell$ such that $I_{F,d}$ has generators consisting of polynomials in $\barQ[x][Y]$ whose degrees in $x$ are not greater than $\ell$.
     \item [$(\rmnum{2})$] Construct a nonzero operator $L$ in $\barQ[x][\partial]$ that annihilates $x^i F^{\bfm_j}$ for all $0\leq i \leq \ell$ and $0\leq j \leq N$, where $\bfm_0,\cdots,\bfm_N$ are all elements in $\bZ_{\geq 0}^{n^2}$ satisfying $|\bfm_i|\leq d$
     \item [$(\rmnum{3})$] Let $\kappa$ be an integer that is greater than both $\rho$ and all integer roots of the leading and trailing coefficients of $L$.
     \item [$(\rmnum{4})$] Compute $Z_\kappa,Z_{\kappa+1},\cdots,Z_{\kappa+l-1}$, where $l=\ord(L)$. Set
         $$ P_\bfc(x, Y)=\sum_{i=0}^\ell \sum_{j=0}^N c_{j(\ell+1)+i}x^i Y^{\bfm_j}, \bfc=(c_0,\cdots,c_{(N+1)(\ell+1)-1}).$$
     Putting $P_\bfc(\kappa, Z_\kappa)=\cdots=P_\bfc(\kappa+l-1, Z_{\kappa+l-1})=0$, we obtain a linear system $\L$ in $c_0,c_1,\cdots,c_{(N+1)(\ell+1)-1}$.
     \item [$(\rmnum{5})$]
        Solve $\L$ and return $\left\{ P_{\bar{\bfc}}(x,Y)\,\,\left|\,\,\bar{\bfc}\in \Zero(\L)\cap \barQ^{(N+1)(\ell+1)} \right.\right\}$.
  \end{itemize}
\end{algorithm}
\begin{example}
\label{EXM:example1}
Consider the Fibonacci numbers $F(n)$. It satisfies that
\[
   \begin{pmatrix} F(n+1)\\ F(n+2)\end{pmatrix}=\begin{pmatrix}0 & 1 \\ 1 & 1\end{pmatrix}\begin{pmatrix} F(n)\\ F(n+1)\end{pmatrix}.
\]
Let $$\bfZ=\left(I_2, \begin{pmatrix}0 & 1 \\ 1 & 1\end{pmatrix}, \begin{pmatrix}0 & 1 \\ 1 & 1\end{pmatrix}^2,\cdots\right).$$
We are going to calculate $I_{\bfZ,2}$. Using the results in Appendix A, one sees that there are generators of $I_{\bfZ,2}$ whose degrees in $x$ are zero. Let $\bfm_0,\cdots,\bfm_{14}$ be all vectors in $\bZ_{\geq 0}^4$ satisfying $|\bfm_i|\leq 2$. Let
$$
  L=\partial^6-4\partial^5+2\partial^4+6\partial^3-4\partial^2-2\partial+1.
$$
Then $L$ annihilates $\bfZ^{\bfm_i}$ for all $0\leq i \leq 14$. Computing the first 6 terms of $\bfZ$, denoted by $Z_i$ for $i=0,\cdots,5$.
Set $\bfc=(c_0,c_1,\cdots,c_{14})$ and let $P_{\bfc}(x,Y)$ be defined as in the step $(d)$. Then
\begin{align*}
  P_{\bfc}(x,Z_0)&=c_0+c_1+c_4+c_5+c_8+c_{14},\\
  P_{\bfc}(x,Z_1)&=c_0+c_2+c_3+c_4+c_9+c_{10}+c_{11}+c_{12}+c_{13}+c_{14},\\
  P_{\bfc}(x,Z_2)&=c_0+c_1+c_2+c_3+2c_4+c_5+c_6+c_7+2c_8+c_9+c_{10}+2c_{11}+c_{12}+2c_{13}+4c_{14},\\
   P_{\bfc}(x,Z_3)&=c_0+c_1+2c_2+2c_3+3c_4+\cdots+4c_{12}+6c_{13}+9c_{14},\\
  P_{\bfc}(x,Z_4)&=c_0+2c_1+3c_2+3c_3+5c_4+\cdots+9c_{12}+15c_{13}+25c_{14},\\
   P_{\bfc}(x,Z_5)&=c_0+3c_1+5c_2+5c_3+8c_4+\cdots+25c_{12}+40c_{13}+64c_{14}.
\end{align*}
Solving the linear system $\{P_\bfc(x,Z_i)|i=0,\cdots,5\}$, one has that
\begin{align*}
  c_0&=0, c_1=-c_4, c_2=-c_4-c_3,c_5=-c_8-c_{14}, \\c_6&=-c_8-2c_{14}-c_7-c_{11}-c_{13}, c_9=-c_{14}-c_{10}-c_{11}-c_{12}-c_{13}.
\end{align*}
From this, one sees that $I_{\bfZ,2}$ is generated by
\begin{align*}
   y_{2,1}-y_{1,2}, \,\,y_{2,2}-y_{1,2}-y_{1,1}.
\end{align*}
\end{example}

\subsection{When is $I_{F,d}$ proto-maximal?}
Let $\tilde{d}$ be as in (\ref{EQ:bounds2}). In this section, we shall show that $I_{F,\tilde{d}}$ is proto-maximal. Before proving this, we first describe some properties of $I_{F,d}$. Note that $I_{F,d}$ is contained in a maximal $\sigma$-ideal $I$. Proposition 1.20 in the page 15 of (\cite{put-singer}) states that $\Zero(I)$ is a trivial $k$-torsor for $\stab(I)$. We show that a similar property holds for $I_{F,d}$, i.e. $\Zero(I_{F,d})$ is a trivial $k$-torsor for $\stab(I_{F,d})$. As $\Zero(I)$ is a trivial $k$-torsor for $\stab(I)$, $\Zero(I)\cap \GL_n(k)\neq \emptyset$. Therefore $\Zero(I_{F,d})\cap\GL_n(k)\neq \emptyset$. For short, we denote by $H_{F,d}$ the stabilizer of $I_{F,d}$.

\begin{lemma}
\label{LM:torsors}
   Let $B$ be an element of  $\Zero(I_{F,d})\cap\GL_n(k)$. Then
   $$
      I_{F,d}=\left\langle \left\{Q(B^{-1}Y)\,\,|\,\, Q\in I_{\barQ}(H_{F,d})\cap \barQ[Y]_{\leq d} \right\}\right\rangle.
   $$
\end{lemma}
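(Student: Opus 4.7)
The plan is to exploit the trivial-torsor identity $\Zero(I_{F,d})=B\cdot H_{F,d}$ to translate polynomials vanishing on $\Zero(I_{F,d})$ into polynomials vanishing on $H_{F,d}$ via the degree-preserving $k$-algebra automorphism $P(Y)\mapsto P(BY)$, and then to descend the coefficient field from $k$ to $\barQ$.

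First I would record the basic dictionary: for any $P\in\ring$, one has $P\in I_k(\Zero(I_{F,d}))$ if and only if $P(BY)\in I_k(H_{F,d})$, which is immediate from $\Zero(I_{F,d})=B\cdot H_{F,d}$ inside $\GL_n(\bar{k})$. Since right multiplication by $B\in\GL_n(k)$ is a $k$-algebra automorphism of $k[Y]$ preserving total degree, this equivalence respects the filtration by degree.

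For the inclusion $I_{F,d}\supseteq J$, where $J$ denotes the right-hand ideal: pick $Q\in I_{\barQ}(H_{F,d})\cap\barQ[Y]_{\leq d}$, so $Q(B^{-1}Y)\in k[Y]_{\leq d}$. By the definition of $I_{F,d}$ it suffices to check $Q(B^{-1}F)=0$, i.e.\ that $Q(B^{-1}Y)$ belongs to the maximal $\sigma$-ideal $I\supseteq I_{F,d}$. Every $z\in\Zero(I)\subseteq\Zero(I_{F,d})=BH_{F,d}$ has the form $z=Bh$ with $h\in H_{F,d}$, so $Q(B^{-1}z)=Q(h)=0$; since the PV ring $\ring/I$ is reduced, $I=I_k(\Zero(I))$, whence $Q(B^{-1}Y)\in I$.

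For the inclusion $I_{F,d}\subseteq J$ it is enough to handle a generator $P\in k[Y]_{\leq d}$ with $P(F)=0$. Then $P\in I_{F,d}\subseteq I_k(\Zero(I_{F,d}))$, and the dictionary gives $P(BY)\in I_k(H_{F,d})\cap k[Y]_{\leq d}$. The key step here is the base-change identity
$$
  I_k(H_{F,d})\cap k[Y]_{\leq d}\;=\;\bigl(I_{\barQ}(H_{F,d})\cap\barQ[Y]_{\leq d}\bigr)\cdot k,
$$
which holds because $H_{F,d}$ is defined over $\barQ$: clearing denominators of any element of the left-hand side and expanding its coefficients over the $\barQ$-basis $\{x^i\}$ of $\barQ[x]$ forces each $\barQ$-homogeneous piece to vanish separately on $H_{F,d}(\barQ)$, and hence to lie in $I_{\barQ}(H_{F,d})\cap\barQ[Y]_{\leq d}$. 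Writing $P(BY)=\sum_j c_jQ_j$ accordingly and substituting $Y\mapsto B^{-1}Y$ yields $P=\sum_j c_jQ_j(B^{-1}Y)$, a $k$-linear combination of the stated generators. The main obstacle is precisely this base-change identity — one must confirm that the $\barQ$-rationality of $H_{F,d}$ is compatible with the degree filtration on $k[Y]$; everything else reduces to routine torsor bookkeeping together with the radicality of $I$.
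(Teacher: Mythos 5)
Your argument for the inclusion $I_{F,d}\subseteq J$ is essentially the paper's: writing $P(BY)=\sum_i c_iP_i(Y)$ with $c_i\in k$ linearly independent over $\barQ$ and $P_i\in\barQ[Y]_{\leq d}$, and noting that $P(Bh)=0$ for all $h\in H_{F,d}$ forces each $P_i\in I_{\barQ}(H_{F,d})$; this only uses the containment $BH_{F,d}(\bar{k})\subseteq\Zero(I_{F,d})$, which is immediate from $B\in\Zero(I_{F,d})$ and $H_{F,d}=\stab(I_{F,d})$. That part is sound (your ``base-change identity'' is a slightly informal restatement of the same $\barQ$-linear-independence decomposition).

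The gap is in the opposite inclusion $J\subseteq I_{F,d}$. You write ``Every $z\in\Zero(I)\subseteq\Zero(I_{F,d})=BH_{F,d}$ has the form $z=Bh$,'' which presupposes $\Zero(I_{F,d})=BH_{F,d}(\bar{k})$, i.e.\ that $\Zero(I_{F,d})$ is a single right $H_{F,d}$-coset through $B$. But that is precisely the content of Corollary~\ref{COR:torsors1}, which the paper \emph{deduces} from this lemma; before the lemma is proved one only knows the easy inclusion $BH_{F,d}(\bar{k})\subseteq\Zero(I_{F,d})$, not equality. Concretely, you need $Q(B^{-1}F)=0$, i.e.\ $B^{-1}F\in H_{F,d}(\bar{k})$, and for an arbitrary $B\in\Zero(I_{F,d})\cap\GL_n(k)$ there is no a priori reason why $B$ and $F$ should lie in the same $H_{F,d}$-coset. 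The paper avoids this circularity by first establishing $I_{F,d}=J_{\bar{B}}$ for a point $\bar{B}\in\Zero(I_F)\cap\GL_n(k)$ (there one does have $\Zero(I_F)=\bar{B}G(\bar{k})\subseteq\bar{B}H_{F,d}(\bar{k})$ via Proposition~\ref{PROP:torsors} and $G\leq H_{F,d}$, hence $F\in\bar{B}H_{F,d}(\bar{k})$), and then passing to a general $B$ via the $k$-automorphism $\phi\colon P(Y)\mapsto P(\bar{B}B^{-1}Y)$, which sends $J_{\bar{B}}$ to $J_B$, together with the Noetherian ascending-chain argument $J_B\subseteq\phi(J_B)\subseteq\phi^2(J_B)\subseteq\cdots$ to force $J_B=\phi(J_B)=J_{\bar{B}}$. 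Your proof as written skips this two-stage reduction and therefore assumes what it needs to prove.
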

\begin{proof}
  Denote by $J_B$ the right-hand side.
  Suppose that $P$ is an element of $k[Y]_{\leq d}$ with $P(F)=0$. Then for each $h\in H_{F,d}$, $P(Yh)\in I_{F,d}$ and therefore $P(Bh)=0$. Write $$P(BY)=\sum_{i=1}^l c_i P_i(Y)$$
   where $P_i(Y)\in \barQ[Y]$ and $c_1,\cdots,c_l$ are linearly independent over $\barQ$. Obviously, for all $i$ with $1\leq i \leq l$, the degree of $P_i(Y)$ is not greater than $d$ and $P_i(h)=0$ for all $h\in H_{F,d}$. In other words, $P_i(Y)\in I_{\barQ}(H_{F,d})\cap \barQ[Y]_{\leq d}$ for all $i=1,\cdots,l$. Hence $P \in J_B$ and then $I_{F,d}\subseteq J_B$.

   Notice that $I_F$ is a maximal $\sigma$-ideal that contains $I_{F,d}$. Let $G=\stab(I_F)$. Observe that the action of $\GLbQ$ on $k[Y]$ preserves the degrees of polynomials. From the definition of $I_{F,d}$, one sees that $G\subseteq H_{F,d}$. Due to Proposition~\ref{PROP:torsors}, $\Zero(I_F)=\bar{B}G(\bar{k})$ for any $\bar{B}\in \Zero(I_F)\cap \GL_n(k)$. This implies that $\Zero(I_F)\subseteq \zero(J_{\bar{B}})$ and thus $J_{\bar{B}}\subseteq I_F$. As $F$ is a zero of $I_F$, it is also a zero of $J_{\bar{B}}$. This together with the fact that $J_{\bar{B}}$ is generated by polynomials in $k[Y]_{\leq d}$ implies that $J_{\bar{B}}\subseteq I_{F,d}$. On the other hand, the previous result shows that $I_{F,d}\subseteq J_{\bar{B}}$. Consequently, $I_{F,d}=J_{\bar{B}}$. It remains to prove that $J_B=J_{\bar{B}}$.

   We first have that $J_{\bar{B}}\subseteq J_B$. Define a $k$-automorphism $\phi$ of $k[Y]$ as follows: $$\phi(P(Y))=P(\bar{B}B^{-1}Y).$$
   Then $\phi(J_{\bar{B}})=J_B$, which implies that $J_B\subseteq \phi(J_B)\subseteq \phi^2(J_B)\subseteq \cdots$. Due to the noetherian property of $\ring$, $J_B=\phi(J_B)$. In the sequel, $J_B=J_{\bar{B}}$.
\end{proof}
The above lemma has the following corollaries.
\begin{cor}
\label{COR:torsors1}
  Let $B$ be an element of $\Zero(I_{F,d})\cap \GL_n(k)$. Then
  $$\Zero(I_{F,d})=BH_{F,d}(\bar{k})$$
  i.e. $\Zero(I_{F,d})$ is a trivial $k$-torsor for $H_{F,d}$.
\end{cor}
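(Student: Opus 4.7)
My plan is to combine Lemma~\ref{LM:torsors} with the observation that $H_{F,d}$ is bounded by $d$ in the sense of the definition at the start of Section~2.1. By Lemma~\ref{LM:torsors},
\[
I_{F,d}=\bigl\langle\{Q(B^{-1}Y):Q\in K_d\}\bigr\rangle,\qquad K_d:=I_{\barQ}(H_{F,d})\cap \barQ[Y]_{\leq d},
\]
so $\Zero(I_{F,d})=B\cdot\{h\in \GL_n(\bar{k}): Q(h)=0\ \forall Q\in K_d\}$. Since $H_{F,d}(\bar{k})$ is trivially contained in the right-hand factor, the corollary reduces to showing the reverse inclusion $\{h\in \GL_n(\bar{k}): Q(h)=0\ \forall Q\in K_d\}\subseteq H_{F,d}(\bar{k})$.

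I would obtain this containment by proving $H_{F,d}$ is bounded by $d$. Because $I_{F,d}$ is generated by its degree $\leq d$ part $I_F\cap k[Y]_{\leq d}$ and the right action $g:P(Y)\mapsto P(Yg)$ preserves degree, a matrix $g\in\GL_n(\barQ)$ lies in $H_{F,d}$ iff $P_i(Yg)\in I_F\cap k[Y]_{\leq d}$ for each $P_i$ in a fixed $k$-basis of the finite-dimensional space $I_F\cap k[Y]_{\leq d}$. Expanding $P_i(Yg)$ modulo a $k$-complement in $k[Y]_{\leq d}$ yields polynomials $a_{i,j}(g)\in k[g]_{\leq d}$ whose common vanishing in $\GL_n(\barQ)$ cuts out $H_{F,d}$. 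Expanding each coefficient of $a_{i,j}$ against a fixed $\barQ$-basis $\{\tau_s\}$ of $k$ as $a_{i,j}=\sum_s \tau_s b_{i,j,s}$ with $b_{i,j,s}\in \barQ[g]_{\leq d}$, and invoking the $\barQ$-linear independence of $\{\tau_s\}$ (legitimate because $g\in\GL_n(\barQ)$ forces each $b_{i,j,s}(g)$ into $\barQ$), one sees that $a_{i,j}(g)=0$ iff $b_{i,j,s}(g)=0$ for every $s$. Hence the degree-$\leq d$ polynomials $\{b_{i,j,s}\}\subseteq \barQ[Y]_{\leq d}$ cut out $H_{F,d}$ in $\GL_n(\barQ)$, so $H_{F,d}$ is bounded by $d$.

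Having established this bound, the $b_{i,j,s}$ lie in $K_d$, and by the Nullstellensatz over $\barQ$ the ideal they generate has the same radical as $I_{\barQ}(H_{F,d})$; base-changing to $\bar{k}$ therefore gives $\{h\in \GL_n(\bar{k}): b_{i,j,s}(h)=0\ \forall i,j,s\} = H_{F,d}(\bar{k})$. Combining with the first paragraph, $\Zero(I_{F,d})=BH_{F,d}(\bar{k})$, and this is a trivial $k$-torsor for $H_{F,d}$ since $B\in \Zero(I_{F,d})\cap \GL_n(k)$. The main subtlety is the descent from $k$-coefficient to $\barQ$-coefficient defining equations; the $\barQ$-basis splitting succeeds only because we evaluate at $g\in \GL_n(\barQ)$, so one must perform it before extending scalars back to $\bar{k}$.
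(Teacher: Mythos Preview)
Your argument is correct. The paper states Corollary~\ref{COR:torsors1} without proof, as an immediate consequence of Lemma~\ref{LM:torsors}; your write-up supplies exactly the missing step, namely that $H_{F,d}$ is bounded by $d$, so that $K_d=I_{\barQ}(H_{F,d})\cap\barQ[Y]_{\leq d}$ already cuts out $H_{F,d}$ (and hence $H_{F,d}(\bar{k})$ after base change). Without this observation, Lemma~\ref{LM:torsors} alone only gives $\Zero(I_{F,d})=B\cdot\Zero_{\bar{k}}(K_d)\supseteq BH_{F,d}(\bar{k})$, so your degree argument is genuinely needed and is the natural way to close the gap. The descent from $k$-coefficients to $\barQ$-coefficients via a $\barQ$-basis of the coefficient space is handled correctly, since the evaluation at $g\in\GL_n(\barQ)$ makes each $b_{i,j,s}(g)$ lie in $\barQ$ and the linear-independence argument applies.
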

\begin{cor}
\label{COR:torsors2}
  Let $I_{\rm irr}$ be an associated prime of $I_{F,d}$. Then
  $\stab(I_{\rm irr})=H_{F,d}^\circ$. Moreover $\Zero(I_{\rm irr})$ is a trivial $k$-torsor for $H_{F,d}^\circ$.
\end{cor}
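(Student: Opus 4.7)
The plan is to reduce the corollary to the $\bar k$-geometry of the torsor $\Zero(I_{F,d})$: decomposing $H_{F,d}$ into cosets of its identity component $H_{F,d}^\circ$ will produce the irreducible components of $\Zero(I_{F,d})$, from which both the $k$-torsor structure and the stabilizer can be read off.

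Concretely, I would first apply Corollary~\ref{COR:torsors1} to fix some $B \in \Zero(I_{F,d}) \cap \GL_n(k)$ with $\Zero(I_{F,d}) = B H_{F,d}(\bar k)$. Choosing coset representatives $h_0 = e, h_1, \dots, h_{r-1} \in H_{F,d}(\barQ) \subseteq \GL_n(\barQ)$ of $H_{F,d}^\circ$ in $H_{F,d}$, I obtain
$$
  \Zero(I_{F,d}) = \bigsqcup_{i=0}^{r-1} Z_i, \qquad Z_i := B h_i H_{F,d}^\circ(\bar k).
$$
Each $Z_i$ is an irreducible closed subset of $\GL_n(\bar k)$ (being a translate of the connected $H_{F,d}^\circ$) and is defined over $k$ (since $B \in \GL_n(k)$, $h_i \in \GL_n(\barQ) \subseteq \GL_n(k)$, and $H_{F,d}^\circ$ is defined over $\barQ$), so these are the irreducible components of $\Zero(I_{F,d})$, with corresponding minimal primes $\mathfrak{p}_i := I_k(Z_i) \subset \ring$.

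The next step is to show that every associated prime of $I_{F,d}$ equals some $\mathfrak{p}_i$. Since $H_{F,d}$ acts on $\ring/I_{F,d}$ and permutes the (finitely many) associated primes, the stabilizer in $H_{F,d}$ of any associated prime is a Zariski closed subgroup of finite index, hence contains $H_{F,d}^\circ$. Thus for each associated prime $\mathfrak{p}$, the set $\Zero(\mathfrak{p})$ is nonempty (by the weak Nullstellensatz in the Jacobson ring $\ring$), is $H_{F,d}^\circ$-stable under right multiplication, and is contained in a single component $Z_i$ (by irreducibility). Since $Z_i$ is itself a single right $H_{F,d}^\circ$-orbit, any nonempty $H_{F,d}^\circ$-stable subset of $Z_i$ must equal $Z_i$; and because $\ring$ is Jacobson, a prime is recovered from its zero set, forcing $\mathfrak{p} = \mathfrak{p}_i$.

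Writing $I_{\rm irr} = \mathfrak{p}_i$ for some $i$, both conclusions are now immediate: the point $B h_i \in Z_i \cap \GL_n(k)$ exhibits $\Zero(I_{\rm irr})$ as a trivial $k$-torsor for $H_{F,d}^\circ$, and $g \in \stab(I_{\rm irr})$ iff $Z_i g = Z_i$ iff $B h_i H_{F,d}^\circ g = B h_i H_{F,d}^\circ$ iff $g \in H_{F,d}^\circ$. I expect the main technical point to be the no-embedded-primes step: making precise that the $H_{F,d}$-action on associated primes is sufficiently algebraic that connectedness of $H_{F,d}^\circ$ really forces the stabilizer to absorb it.
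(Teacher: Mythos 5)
Your proof is correct and follows the same geometric skeleton as the paper's: apply Corollary~\ref{COR:torsors1} to write $\Zero(I_{F,d}) = BH_{F,d}(\bar k)$, identify $\Zero(I_{\rm irr})$ with a coset translate $Bh_iH_{F,d}^\circ(\bar k)$, and read off both the $k$-torsor structure and the stabilizer. The main difference is that you supply an explicit argument, absent from the paper, that every associated prime of $I_{F,d}$ is minimal: since $H_{F,d}$ permutes the finitely many associated primes, the connected $H_{F,d}^\circ$ must stabilize each one, and because each irreducible component $Z_i$ is a single right $H_{F,d}^\circ$-orbit, a nonempty $H_{F,d}^\circ$-stable closed subset of $Z_i$ must be all of $Z_i$. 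The paper's proof silently begins with ``Let $B \in \Zero(I_{\rm irr})\cap \GL_n(k)$'' and asserts $\Zero(I_{\rm irr}) = BH_i(\bar k)$ for an irreducible component $H_i$ of $H_{F,d}$; this tacitly presumes both that $I_{\rm irr}$ is a minimal prime (so $\Zero(I_{\rm irr})$ is an irreducible component of $\Zero(I_{F,d})$) and that the $k$-point $B$ exists. Your version makes both points explicit (the latter via $h_i \in H_{F,d}(\barQ) \subseteq \GL_n(k)$), so while the approach is substantively the same, your write-up closes a small gap in the paper's exposition.
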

\begin{proof}
 Let $B$ be an element of $\Zero(I_{\rm irr})\cap \GL_n(k)$.
 By Corollary~\ref{COR:torsors1}, $$\Zero(I_{\rm irr})\cap \GL_n(\bar{k})=BH_i(\bar{k})$$
 where $H_i$ is an irreducible component of $H_{F,d}$. Since $B\in \Zero(I_{\rm irr})$, $H_i=H_{F,d}^\circ$. Thus $\Zero(I_{\rm irr})$ is a trivial $k$-torsor for $H_{F,d}^\circ$. For each $g\in \GL_n(\bar{k})$, one can define an isomorphism $\phi_g$ of $\GL_n(\bar{k})$ given by $\phi_g(Z)=Zg$. As $I_{\rm irr}$ is prime, one can verify that $h\in \stab(I_{\rm irr})$ if and only if $\phi_h(BH_{F,d}^\circ(\bar{k}))=BH_{F,d}^\circ(\bar{k})$. On the other hand, for $h\in \GL_n(\barQ)$, $\phi_h(BH_{F,d}^\circ(\bar{k}))=BH_{F,d}^\circ(\bar{k})$ if and only if $h\in H_{F,d}^\circ$. Therefore $\stab(I_{\rm irr})=H_{F,d}^\circ$.
\end{proof}
\begin{lemma}
\label{LM:property2}
  Suppose that $\bar{F}$ is a fundamental matrix of (\ref{EQ:differenceeqns}). If $\bar{F}$ is a zero of $I_{F,d}$, then $$I_{F,d}=I_{\bar{F},d}.$$
\end{lemma}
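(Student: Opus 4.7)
The plan is to prove the two inclusions $I_{F,d}\subseteq I_{\bar F,d}$ and $I_{\bar F,d}\subseteq I_{F,d}$ separately. The first is immediate from the hypothesis: any generator $P\in k[Y]_{\leq d}$ of $I_{F,d}$ satisfies $P(F)=0$, and since $\bar F\in\Zero(I_{F,d})$ we also have $P(\bar F)=0$, so $P$ is itself a generator of $I_{\bar F,d}$. The content of the lemma therefore lies in the reverse inclusion.

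For the reverse inclusion, the plan is to produce a single element $g\in H_{F,d}=\stab(I_{F,d})$ that algebraically ``translates'' $\bar F$ to $F$ at the level of polynomial evaluation. Let $I_{\bar F}=\{P\in\ring : P(\bar F)=0\}$; since $\bar F$ is a fundamental matrix of (\ref{EQ:differenceeqns}), $\ring/I_{\bar F}$ is a Picard-Vessiot ring, so $I_{\bar F}$ is a maximal $\sigma$-ideal of $\ring$. By the fact recorded in the Introduction that the maximal $\sigma$-ideals of $\ring$ form a single $\GL_n(\barQ)$-orbit under the action $g\cdot P(Y)=P(Yg)$, there exists $g\in\GL_n(\barQ)$ with $g\cdot I_{\bar F}=I_F$. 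A routine unwinding of the action then delivers the biconditional
\begin{equation*}
   P(\bar F)=0\ \Longleftrightarrow\ P(Fg)=(g\cdot P)(F)=0\qquad\text{for every }P\in\ring.
\end{equation*}

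The rest is bookkeeping. Applying the biconditional with $P$ a generator of $I_{F,d}$: we have $P\in k[Y]_{\leq d}$, $P(F)=0$, and by the hypothesis $P(\bar F)=0$, so $g\cdot P\in I_F$; since the action preserves total degree in $Y$, $g\cdot P\in k[Y]_{\leq d}\cap I_F$ is again a generator of $I_{F,d}$. Hence $g\cdot I_{F,d}\subseteq I_{F,d}$, and iterating $g^{-1}$ yields an ascending chain $I_{F,d}\subseteq g^{-1}\cdot I_{F,d}\subseteq g^{-2}\cdot I_{F,d}\subseteq\cdots$ which stabilizes by the Noetherian property of $\ring$ (exactly the trick used at the end of the proof of Lemma~\ref{LM:torsors}), giving $g\cdot I_{F,d}=I_{F,d}$, i.e.\ $g\in H_{F,d}$. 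Conversely, any generator $Q\in k[Y]_{\leq d}$ of $I_{\bar F,d}$ satisfies $Q(\bar F)=0$, whence $g\cdot Q\in k[Y]_{\leq d}\cap I_F$ is a generator of $I_{F,d}$, and therefore $Q=g^{-1}\cdot(g\cdot Q)\in g^{-1}\cdot I_{F,d}=I_{F,d}$. This yields $I_{\bar F,d}\subseteq I_{F,d}$, completing the proof.

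The main obstacle is the second paragraph: converting the pointwise hypothesis ``$P(\bar F)=0$ for all $P\in I_{F,d}$'' into the existence of one matrix $g\in H_{F,d}$ that simultaneously transports $\bar F$ to $F$ on every polynomial of degree $\leq d$. The maximality of $I_{\bar F}$ and the $\GL_n(\barQ)$-conjugacy of all maximal $\sigma$-ideals of $\ring$ are the crucial inputs; once $g$ is identified, the degree-preservation of the action together with the Noetherian stabilization already used in Lemma~\ref{LM:torsors} finishes the argument with no further work.
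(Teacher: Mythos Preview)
Your proof is correct and shares the paper's core mechanism: an automorphism $Y\mapsto Yg$ that relates the two ideals, followed by the Noetherian ascending-chain stabilization. The difference is only in how the automorphism is obtained. You invoke the $\GL_n(\barQ)$-conjugacy of maximal $\sigma$-ideals to produce $g$ with $g\cdot I_{\bar F}=I_F$, and then spend a paragraph showing $g\in H_{F,d}$. The paper bypasses this: since $F$ and $\bar F$ are both fundamental matrices, one has directly $F=\bar F h$ for some $h\in\GL_n(\barQ)$, and then $\phi_h(P)(\bar F)=P(\bar F h)=P(F)$ immediately yields the \emph{identity} $\phi_h(I_{F,d})=I_{\bar F,d}$ on generating sets, with no appeal to Picard--Vessiot uniqueness and no need to verify membership in the stabilizer. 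Applying $\phi_h$ to the inclusion $I_{F,d}\subseteq I_{\bar F,d}$ then gives the chain $I_{\bar F,d}\subseteq\phi_h(I_{\bar F,d})\subseteq\cdots$, and stabilization finishes as in your argument. Your $g$ is in fact (a choice of) the paper's $h^{-1}$; the detour through maximal $\sigma$-ideals is correct but avoidable.
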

\begin{proof}
  From the assumption, one has that $I_{F,d}\subseteq I_{\bar{F},d}$. Observe that $F=\bar{F}h$ for some $h\in \GLbQ$. The definition of $I_{F,d}$ implies that $\phi_h(I_{F,d})=I_{\bar{F},d}$, where the homomorphism $\phi_h$ is given by $\phi_h(Y)=Yh$. The successive applications of $\phi_h$ to $I_{F,d}\subseteq I_{\bar{F},d}$ induce that
  $$
     I_{\bar{F},d}\subseteq \phi_h(I_{\bar{F},d})\subseteq \cdots \subseteq \phi_h^i(I_{\bar{F},d})\subseteq\cdots .
  $$
  The noetherian property of the ring $\ring$ implies that $I_{\bar{F},d}=\phi_h(I_{\bar{F},d})$. Therefore $\phi_h(I_{F,d})=\phi_h(I_{\bar{F},d})$, i.e. $I_{F,d}=I_{\bar{F},d}$.
\end{proof}
\begin{prop}
\label{PROP:protogaloisgroup}
 $\stab(I_{F,\tilde{d}})$ is a proto-group of $\stab(I)$, where $I$ is any maximal $\sigma$-ideal containing $I_{F,\tilde{d}}$. In other words, $I_{F,\tilde{d}}$ is proto-maximal.
\end{prop}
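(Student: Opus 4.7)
The plan is to sandwich $H := \stab(I_{F,\tilde{d}})$ between the Galois group $G := \stab(I)$ and an auxiliary proto-group $H'$ of $G$ supplied by Proposition~\ref{PROP:bounds}, and then upgrade $H$ to a proto-group via Remark~\ref{RM:protogroup}. First I would reduce to the case $I = I_F$: since $I$ is a maximal $\sigma$-ideal, the uniqueness of the Picard--Vessiot ring writes $I = I_{F'}$ for some fundamental matrix $F'$ of (\ref{EQ:differenceeqns}); the hypothesis $I_{F,\tilde{d}} \subseteq I$ says $F'$ is a zero of $I_{F,\tilde{d}}$, so Lemma~\ref{LM:property2} yields $I_{F',\tilde{d}} = I_{F,\tilde{d}}$. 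Replacing $F$ by $F'$, we may assume $I = I_F$, and must verify $H_u \leq G^\circ \leq G \leq H$.

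The inclusion $G \leq H$ is almost tautological. A typical generator of $I_{F,\tilde{d}}$ is $P \in k[Y]_{\leq \tilde{d}}$ with $P(F) = 0$, and for $g \in G = \stab(I_F)$ the polynomial $P(Yg)$ still lies in $k[Y]_{\leq \tilde{d}}$ and vanishes at $F$ (because $g$ stabilizes $I_F$), so $P(Yg) \in I_{F,\tilde{d}}$, i.e.\ $g \in H$.

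The heart of the argument is $H \leq H'$. By Proposition~\ref{PROP:bounds}, choose a proto-group $H'$ of $G$ bounded by $\tilde{d}$, so $H'$ is cut out in $\GLbQ$ by a set $\bS \subseteq \barQ[Y]_{\leq \tilde{d}}$. Using the trivial $k$-torsor structure of $\Zero(I_F)$ from Proposition~\ref{PROP:torsors}, pick $B \in \Zero(I_F) \cap \GL_n(k)$ and form
\[
    I' := \left\langle Q(B^{-1}Y) \,:\, Q \in \bS \right\rangle \subseteq \ring.
\]
Since $B^{-1} \in \GL_n(k)$, each generator of $I'$ lies in $k[Y]_{\leq \tilde{d}}$; since $\bS$ cuts out $H'$, each vanishes on $BH'(\bar{k}) \supseteq BG(\bar{k}) = \Zero(I_F)$; and since $I_F$ is radical (its quotient, the Picard--Vessiot ring, is reduced), each therefore lies in $I_F$, and hence in $I_{F,\tilde{d}}$. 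Thus $I' \subseteq I_{F,\tilde{d}}$. Taking zero sets in $\GL_n(\bar{k})$, Corollary~\ref{COR:torsors1} identifies $\Zero(I_{F,\tilde{d}}) = BH(\bar{k})$, while a direct computation ($z \in \Zero(I') \iff B^{-1}z \in H'(\bar{k})$) yields $\Zero(I') = BH'(\bar{k})$, so $BH(\bar{k}) \subseteq BH'(\bar{k})$, whence $H \leq H'$.

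Since $G \leq H \leq H'$ and $H'$ is a proto-group of $G$, Remark~\ref{RM:protogroup} concludes that $H$ is itself a proto-group of $G$, as required. The main obstacle, in my view, is pinning down the auxiliary ideal $I'$ correctly: one must simultaneously exploit the degree bound on $H'$ (to land generators in $k[Y]_{\leq \tilde{d}}$), the torsor identification $\Zero(I_F) = BG(\bar{k})$ (to force those generators into $I_F$), and the radicality of $I_F$ (to then place them in $I_{F,\tilde{d}}$), before reading off $H \leq H'$ from the zero-set containment.
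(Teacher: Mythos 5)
Your proof is correct and follows essentially the same route as the paper's: reduce to $I = I_{F'}$ via Lemma~\ref{LM:property2}, invoke Proposition~\ref{PROP:bounds} to get an auxiliary proto-group $H'$ bounded by $\tilde{d}$, pull back its defining polynomials by $B^{-1}$ with $B \in \Zero(I)\cap\GL_n(k)$ to land inside $I_{F,\tilde{d}}$, read off $G \leq \stab(I_{F,\tilde{d}}) \leq H'$ from the torsor/zero-set identifications, and close with Remark~\ref{RM:protogroup}. The only cosmetic difference is that you verify $G \leq \stab(I_{F,\tilde{d}})$ by a direct degree-preserving stabilizer argument, whereas the paper extracts both containments at once from the zero-set chain; this is a minor presentational choice, not a different method.
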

\begin{proof}
   Let $G=\stab(I)$ and $H$ be an algebraic subgroup of $\GLbQ$ that is bounded by $\tilde{d}$ and is a proto-group of $G$. Such $H$ exists by Proposition~\ref{PROP:bounds}.
   Observe that there is a fundamental matrix $\bar{F}$ such that $I=I_{\bar{F}}$. Since $I_{F,\tilde{d}}\subseteq I$, we have that $I_{\bar{F},\tilde{d}}=I_{F,\tilde{d}}$ due to Lemma~\ref{LM:property2}.
   Let $B$ be an element of $\Zero(I)\cap \GL_n(k)$ and let
   $$J=\left\{Q(B^{-1}Y)\,\,|\,\, Q\in I_{\barQ}(H)\cap \barQ[Y]_{\leq \tilde{d}}\right\}.$$
   Since $H$ is bounded by $\tilde{d}$, $\Zero(J)=BH(\bar{k})$. By Proposition~\ref{PROP:torsors}, $\Zero(I)=BG(\bar{k})$. Therefore $J\subseteq I$, because $I$ is radical and $H$ is a proto-group of $G$. Note that $\bar{F}$ is a zero of $I$. One then has that $\bar{F}$ is a zero of $J$.
   This implies that
   $$
     J\subseteq I_{\bar{F},\tilde{d}}=I_{F,\tilde{d}}\subseteq I.
   $$
   The first inclusion holds because $J$ is generated by a set of polynomials in $k[Y]_{\leq \tilde{d}}$.
   Lemma~\ref{LM:torsors} then implies that
   $$
      G\leq H_{F,\tilde{d}} \leq H.
   $$
   Then the proposition follows from Remark~\ref{RM:protogroup}.
\end{proof}
\begin{example}
\label{EXM:example2}Consider
\begin{equation}
\label{EQ:thirdorder}
\sigma\begin{pmatrix}y_1\\y_2 \\y_3 \end{pmatrix}=\begin{pmatrix}0&1&0\\ 0& 0&1 \\ x & 0& 0\end{pmatrix}\begin{pmatrix}y_1\\y_2 \\y_3 \end{pmatrix}.
\end{equation}
Using the method developed in Section 3.1, we can compute a $\sigma$-ideal
\begin{align*}
    I_{\bfZ,2}=\langle &y_{1,1}y_{1,2}, y_{1,1}y_{1,3}, y_{1,1}y_{2,1}, y_{1,1}y_{2,3}, y_{1,1}y_{3,1}, y_{1,1}y_{3,2}, y_{1,2}y_{1,3}, y_{1,2}y_{2,1}, y_{1,2}y_{2,2}, y_{1,2}y_{3,2}, y_{1,2}y_{3,3}, \\ &y_{1,3}y_{2,2}, y_{1,3}y_{2,3}, y_{1,3}y_{3,1}, y_{1,3}y_{3,3}, y_{2,1}y_{2,2}, y_{2,1}y_{2,3}, y_{2,1}y_{3,1}, y_{2,1}y_{3,3}, y_{2,2}y_{2,3}, y_{2,2}y_{3,1}, y_{2,2}y_{3,2},\\& y_{2,3}y_{3,2}, y_{2,3}y_{3,3}, y_{3,1}y_{3,2}, y_{3,1}y_{3,3}, y_{3,2}y_{3,3} \rangle.
\end{align*}
Furthermore, one has that
{\small
\begin{align*}
   \stab(I_{\bfZ,2})=\left\{\left.\begin{pmatrix}\alpha & 0 & 0 \\ 0 & \beta & 0 \\ 0& 0 & \gamma\end{pmatrix}\right| \alpha\beta\gamma\neq 0\right\}\bigcup \left\{\left.\begin{pmatrix}0 & \alpha & 0 \\ 0 & 0& \beta \\ \gamma & 0 & 0\end{pmatrix}\right| \alpha\beta\gamma\neq 0\right\}
   \bigcup \left\{\left.\begin{pmatrix}0 & 0 & \alpha \\ \beta & 0& 0 \\ 0& \gamma & 0\end{pmatrix}\right| \alpha\beta\gamma\neq 0\right\}.
\end{align*}}
Since all elements in $\stab(I_{\bfZ,2})$ is semi-simple, $\stab(I_{\bfZ,2})$ is a proto-Galois group of (\ref{EQ:differenceeqns}) over $k$, i.e. $I_{\bfZ,2}$ is a proto-maximal $\sigma$-ideal.
\end{example}

\section{The computation of a maximal $\sigma^\delta$-ideal}
The results in the previous section enable us to calculate a proto-maximal $\sigma$-ideal. Suppose that we have obtained a proto-maximal $\sigma$-ideal, say $I_{F,\tilde{d}}$. Let $I_{\rm irr}$ be an associated prime of $I_{F,\tilde{d}}$. It can be obtained by the algorithmic solutions of the problem $(P1)$. Since $I_{F,\tilde{d}}$ is a $\sigma$-ideal, $I_{\rm irr}$ is a $\sigma^\delta$-ideal for some positive integer $\delta$. In the following, we will enlarge $I_{\rm irr}$ to obtain a maximal $\sigma^\delta$-ideal. By Corollary~\ref{COR:torsors2}, one sees that for any $B\in \Zero(I_{\rm irr})\cap \GL_n(k)$,
\begin{equation}
\label{EQ:torsor3}
   \Zero(I_{\rm irr})=BH_{F,\tilde{d}}^\circ(\bar{k})\,\,\mbox{and}\,\,\stab(I_{\rm irr})=H_{F,\tilde{d}}^\circ.
\end{equation}
Let $I_\delta$ be a maximal $\sigma^\delta$-ideal that contains $I_{\rm irr}$. Proposition~\ref{PROP:torsors} implies that
$\Zero(I_\delta)$ is a trivial $k$-torsor for $\stab(I_\delta)$, i.e. $\Zero(I_\delta)=BG_\delta(\bar{k})$ where $B\in \Zero(I_\delta)\cap\GL_n(k)$.
Then the equality (\ref{EQ:torsor3}) induces that $\stab(I_\delta)\subseteq H_{F,\tilde{d}}^\circ$. We shall show that $H_{F,\tilde{d}}^\circ$ is a proto-group of $\stab(I_\delta)$.
\begin{lemma}
\label{LM:intersection}
 Let $\tilde{I}$ be a maximal $\sigma^\delta$-ideal and $I=\tilde{I}\cap \sigma(\tilde{I})\cap \cdots\cap \sigma^{\delta-1}(\tilde{I})$. Then
 \begin{itemize}
   \item [$(a)$]
   $I$ is a maximal $\sigma$-ideal.
   \item [$(b)$]
   $[\stab(I):\stab(\tilde{I})]\leq \delta$.
  \end{itemize}
\end{lemma}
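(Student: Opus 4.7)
My plan is to handle parts (a) and (b) separately, both leaning on two ingredients: the $\GL_n(\barQ)$-action on $R:=k[Y,1/\det(Y)]$ commutes with $\sigma$, and a standard comaximality computation applied to the translates $\sigma^i(\tilde I)$.

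For part (a), I first verify that $I$ is $\sigma$-stable. Because $R$ is noetherian and $\sigma$ is an automorphism, the inclusion $\sigma^\delta(\tilde I)\subseteq\tilde I$ promotes to equality (via the ascending chain $\tilde I\subseteq\sigma^{-\delta}(\tilde I)\subseteq\sigma^{-2\delta}(\tilde I)\subseteq\cdots$), so $\sigma(I)=\sigma(\tilde I)\cap\cdots\cap\sigma^\delta(\tilde I)=I$. For maximality, take any $\sigma$-ideal $J$ with $I\subsetneq J\subseteq R$. Then $J$ is also a $\sigma^\delta$-ideal, and $J+\tilde I$ is a $\sigma^\delta$-ideal containing $\tilde I$, so by maximality of $\tilde I$ it equals either $\tilde I$ or $R$. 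In the first case $J\subseteq\tilde I$; since $\sigma(J)=J$, applying $\sigma^i$ gives $J\subseteq\sigma^i(\tilde I)$ for every $i$, hence $J\subseteq I$, contradicting $I\subsetneq J$. In the second case, $\sigma^i$-translating yields $J+\sigma^i(\tilde I)=R$ for all $i$, and the standard chain
\[\textstyle R=\prod_{i=0}^{\delta-1}(J+\sigma^i(\tilde I))\subseteq J+\prod_i\sigma^i(\tilde I)\subseteq J+\bigcap_i\sigma^i(\tilde I)=J+I=J\]
forces $J=R$.

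For part (b), the key point is that $g\cdot$ and $\sigma$ commute as operators on $R$: $\sigma(g\cdot Y)=\sigma(Yg)=(AY)g=g\cdot(AY)=g\cdot\sigma(Y)$, the middle equalities holding because $g\in\GL_n(\barQ)$ is $\sigma$-fixed and $A\in\GL_n(k)$ is $g$-fixed. Hence $g\cdot\tilde I$ is a maximal $\sigma^\delta$-ideal for every $g$. Now fix $g\in\stab(I)$; then $I=g\cdot I\subseteq g\cdot\tilde I$, so $g\cdot\tilde I$ is a maximal $\sigma^\delta$-ideal containing $I$. I claim that any such ideal $J'$ must equal some $\sigma^i(\tilde I)$: if not, $J'\not\subseteq\sigma^i(\tilde I)$ for each $i$, whence $J'+\sigma^i(\tilde I)\supsetneq\sigma^i(\tilde I)$, and the $\sigma^\delta$-maximality of $\sigma^i(\tilde I)$ gives $J'+\sigma^i(\tilde I)=R$ for all $i$; the same product-of-ideals argument as in (a) then forces $J'=R$, contradicting properness. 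Consequently the map $g\mapsto g\cdot\tilde I$ sends $\stab(I)$ into the set $\{\tilde I,\sigma(\tilde I),\ldots,\sigma^{\delta-1}(\tilde I)\}$, and its fiber over $\tilde I$ is precisely $\stab(\tilde I)$; orbit-stabilizer therefore gives $[\stab(I):\stab(\tilde I)]\le\delta$.

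The only real obstacle is to pin down the commutation $g\cdot\sigma=\sigma\cdot g$ on $R$ together with the consequence that $g$-translates of maximal $\sigma^\delta$-ideals remain maximal $\sigma^\delta$-ideals; once this is in hand, both (a) and (b) reduce to the same comaximality computation $\prod_i(J+\sigma^i(\tilde I))\subseteq J+\bigcap_i\sigma^i(\tilde I)$ and a routine orbit count, with no further technical input needed.
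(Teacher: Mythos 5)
Your proof is correct, and it takes a genuinely different route from the paper's. For part (a), the paper starts with an arbitrary maximal $\sigma$-ideal $\bar I$, embeds it in a maximal $\sigma^\delta$-ideal $J$, observes $\bar I=\bigcap_i\sigma^i(J)$, and then transports everything to $\tilde I$ via a $\sigma$-isomorphism $\phi_g$, invoking the uniqueness of Picard-Vessiot extensions to produce the conjugating $g$; you instead argue directly by a comaximality computation: for a $\sigma$-ideal $J\supsetneq I$, either $J\subseteq\tilde I$ (ruled out by $\sigma$-translating) or $J+\sigma^i(\tilde I)=R$ for every $i$, in which case the inclusion $\prod_i(J+\sigma^i(\tilde I))\subseteq J+\bigcap_i\sigma^i(\tilde I)=J$ forces $J=R$. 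For part (b), the paper appeals to Proposition 1.20 of van der Put--Singer to realize the relevant zero sets as trivial $k$-torsors and reads the index bound off the resulting coset decomposition $G=\bigcup_i\tilde G g_i$; you instead establish the commutation $\sigma\circ(g\cdot)=(g\cdot)\circ\sigma$ on $R$ directly, deduce that $g\cdot\tilde I$ is again a maximal $\sigma^\delta$-ideal containing $I$ for $g\in\stab(I)$, show by the same comaximality argument that every maximal $\sigma^\delta$-ideal containing $I$ is one of $\tilde I,\sigma(\tilde I),\dots,\sigma^{\delta-1}(\tilde I)$, and finish with orbit-stabilizer. Your approach is more elementary and self-contained: it needs neither PV-uniqueness nor the torsor structure, only the noetherian property, the comaximality inclusion, and the observation that the $\GL_n(\barQ)$-action commutes with $\sigma$. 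The paper's version is shorter because it leans on the heavier machinery already in place; yours would work in a more general setting where those structural results are not yet available.
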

\begin{proof}
  $(a)$. Suppose that $\bar{I}$ is a maximal $\sigma$-ideal. Let $J$ be a maximal $\sigma^\delta$-ideal containing $\bar{I}$. Then $\bar{I}\subseteq \cap_{i=0}^{\delta-1} \sigma^i(J)$. On the other hand, $\cap_{i=0}^{\delta-1} \sigma^i(J)$ is a $\sigma$-ideal and thus it is equal to $\bar{I}$.
   For any $g\in \GLbQ$, one can define a $\sigma$-isomorphism $\phi_g$ of $\ring$ given by $\phi_g(Y)=Yg$. From the uniqueness of the Picard Vessiot extensions, one can easily see that there is $g\in \GLbQ$ such that $\phi_g(J)=\tilde{I}$. This implies that $\phi_g(\bar{I})=I$. Hence $I$ is a maximal $\sigma$-ideal.

   $(b)$. Let $G=\stab(I)$ and $\tilde{G}=\stab(\tilde{I})$. Let $B$ be an element of $\Zero(\tilde{I})\cap \GL_n(k)$. Due to Proposition~\ref{PROP:torsors}, one has that
  \begin{equation}
  \label{EQ:eqn1}
      \Zero(I)=BG(\bar{k})\,\,\mbox{and}\,\,\Zero(\tilde{I})=B\tilde{G}(\bar{k}).
  \end{equation}
  Meanwhile, all $\sigma^i(\tilde{I})$ are maximal $\sigma^\delta$-ideals. Hence there are $g_1,\cdots, g_{\delta-1}\in \GLbQ$ such that
  $\phi_{g_i}(\sigma^i(\tilde{I}))=\tilde{I}$. This implies that
  \begin{equation}
   \label{EQ:eqn2}
     \Zero(\sigma^i(\tilde{I}))=B\tilde{G}(\bar{k})g_i,\,\,i=0,1,\cdots, \delta-1.
  \end{equation}
  The equalities (\ref{EQ:eqn1}) and (\ref{EQ:eqn2}) imply that $G=\cup_{i=0}^{\delta-1} \tilde{G}g_i$. In the sequel, $[G:\tilde{G}]\leq \delta.$
\end{proof}
Let $I=I_\delta\cap \sigma(I_\delta)\cap \cdots \cap \sigma^{\delta-1}(I_\delta)$. Then $I_{F,\tilde{d}}\subseteq I$. The above lemma together with Proposition~\ref{PROP:protogaloisgroup} induces that $H_{F,\tilde{d}}$ is a proto-group of $\stab(I)$, i.e.
$$
   \left(H_{F,\tilde{d}}\right)_u \leq \left(\stab(I)\right)^\circ \leq \stab(I) \leq H_{F,\tilde{d}}.
$$
Observe that $\left(H_{F,\tilde{d}}\right)_u=\left(H_{F,\tilde{d}}^\circ\right)_u$. Due to the above lemma again, $\left(\stab(I)\right)^\circ=\left(\stab(I_\delta)\right)^\circ$. Thus
$$
   \left(H_{F,\tilde{d}}^\circ\right)_u \leq \left(\stab(I_\delta)\right)^\circ \leq \stab(I_\delta)\leq H_{F,\tilde{d}}^\circ
$$
i.e. $H_{F,\tilde{d}}^\circ$ is a proto-group of $\stab(I_\delta)$. Proposition~\ref{PROP:kernels} implies that $\stab(I_\delta)$ is the intersection of kernels of some characters of $H_{F,\tilde{d}}^\circ$. This will enable us to construct $I_\delta$. Suppose that $\bar{\chi}_1,\cdots,\bar{\chi}_l$ are characters of $H_{F,\tilde{d}}^\circ$ satisfying
$$
   \ker(\bar{\chi}_1)\cap \cdots\cap \ker(\bar{\chi}_l)=\stab(I_\delta).
$$
Then we have the following lemma.
\begin{lemma}
\label{LM:maximalideal}
   Let $B$ be an element of $\Zero(I_\delta)\cap \GL_n(k)$ and
   $$
       \bS= I_{\rm irr}\cup \{\bar{\chi}_i(B^{-1}Y)-1\,\,|\,\,i=1,\cdots,l\}.
   $$
   Then $\Zero(I_\delta)=\Zero(\bS)$.
\end{lemma}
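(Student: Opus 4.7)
The plan is to prove the set equality $\Zero(I_\delta)=\Zero(\bS)$ by double inclusion, leveraging the two trivial torsor descriptions already in hand. Since $B\in\Zero(I_\delta)\subseteq\Zero(I_{\rm irr})$ and $B\in\GL_n(k)$, Corollary~\ref{COR:torsors2} gives
\[
    \Zero(I_{\rm irr})=BH_{F,\tilde{d}}^\circ(\bar{k}),
\]
and applying Proposition~\ref{PROP:torsors} in the $\sigma^\delta$-setting identifies
\[
    \Zero(I_\delta)=B\,\stab(I_\delta)(\bar{k}).
\]
By the choice of the $\bar\chi_i$, we moreover have $\stab(I_\delta)=\bigcap_{i=1}^l\ker(\bar\chi_i)$ inside $H_{F,\tilde{d}}^\circ$. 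These three equalities are all the geometric data needed.

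For the inclusion $\Zero(I_\delta)\subseteq\Zero(\bS)$: any $Z\in\Zero(I_\delta)$ already lies in $\Zero(I_{\rm irr})$ because $I_{\rm irr}\subseteq I_\delta$. Writing $Z=Bg$ with $g\in\stab(I_\delta)(\bar{k})$, the membership $g\in\ker(\bar\chi_i)$ yields $\bar\chi_i(B^{-1}Z)-1=\bar\chi_i(g)-1=0$ for every $i$, so $Z$ annihilates all of $\bS$.

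For the reverse inclusion $\Zero(\bS)\subseteq\Zero(I_\delta)$: any $Z\in\Zero(\bS)$ is a zero of $I_{\rm irr}$, hence $Z=Bh$ for some $h\in H_{F,\tilde{d}}^\circ(\bar{k})$. The remaining defining equations of $\bS$ then read $\bar\chi_i(h)=\bar\chi_i(B^{-1}Z)=1$, so $h\in\bigcap_{i=1}^l\ker(\bar\chi_i)=\stab(I_\delta)(\bar{k})$, and therefore $Z=Bh\in B\,\stab(I_\delta)(\bar{k})=\Zero(I_\delta)$.

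The argument is short because all of the heavy lifting has been done: the torsor structure of $\Zero(I_{\rm irr})$ and $\Zero(I_\delta)$, and the character description of $\stab(I_\delta)$ inside the connected proto-group $H_{F,\tilde{d}}^\circ$. The only point to watch is that $\bar\chi_i(B^{-1}Y)$ is a well-defined element of $\ring$ (choose a representative in $\barQ[Y,1/\det(Y)]$ of the character, which exists since characters of a subgroup of $\GL_n$ extend to regular functions on $\GL_n$), and that evaluation of $\bar\chi_i$ at points of $BH_{F,\tilde{d}}^\circ(\bar{k})$ is independent of the chosen representative modulo $I_\barQ(H_{F,\tilde{d}}^\circ)$; this is automatic since such points correspond precisely to $\bar{k}$-points of $H_{F,\tilde{d}}^\circ$ after translation by $B^{-1}$. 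I do not anticipate a genuine obstacle, only the bookkeeping of translating between the ideal language (elements of $\bS$) and the group-theoretic language (membership in $\stab(I_\delta)$).
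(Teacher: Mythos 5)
Your proof is correct and follows essentially the same route as the paper's: both establish $\Zero(\bS)=B\,\stab(I_\delta)(\bar{k})$ via the torsor descriptions of $\Zero(I_{\rm irr})$ and $\Zero(I_\delta)$ together with the fact that $\stab(I_\delta)$ is cut out in $H_{F,\tilde{d}}^\circ$ by the characters $\bar\chi_i$. The only difference is cosmetic — you state the double inclusion more explicitly and add a remark about well-definedness of $\bar\chi_i(B^{-1}Y)$, which the paper takes for granted.
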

\begin{proof}
  Let $G_\delta=\stab(I_\delta)$. It suffices to show that $\Zero(\bS)=BG_\delta(\bar{k})$.
  Observe that $B\in \Zero(I_{\rm irr})\cap \GL_n(k)$, which implies that $\Zero(I_{\rm irr})=BH_{F,\tilde{d}}^\circ(\bar{k})$. Suppose that $Z\in BG_\delta(\bar{k})$. As $G_\delta$ is the intersection of kernels of the characters $\bar{\chi}_1,\cdots, \bar{\chi}_l$, one sees that $Z\in \Zero(\bS)$. Conversely, assume that $Z\in \Zero(\bS)$. Then $Z\in \zero(I_{\rm irr})$ and thus $Z=Bh$ for some $h\in H_{F,\tilde{d}}^\circ(\bar{k})$. Meanwhile for each $i=1,\cdots,l$,
  $$\bar{\chi}_i(B^{-1}Z)=\bar{\chi}_i(h)=1.$$
  This implies that $h\in G_\delta(\bar{k})$. Therefore $\Zero(\bS)=BG_\delta(\bar{k})$.
\end{proof}
Proposition~\ref{PROP:torsors} states that invertible elements of $\ring/I_{\rm irr}$ are $k$ multiples of characters of $H_{F,\tilde{d}}^\circ$. Precisely, let $P$ be an invertible element of $\ring/I_{\rm irr}$, then $P=r\chi(B^{-1}Y)$ for some $r\in k$ and some character $\chi$ of $H_{F,\tilde{d}}^\circ$. By the above lemma, to compute $I_\delta$, it suffices to find invertible elements of $\ring/I_{\rm irr}$ that take constant values on $\Zero(I_\delta)$. In the following, we first show that invertible elements of $\ring/I_{\rm irr}$ are actually $\sigma^\delta$-hypergeometric over $k$ and then prove that algebraic relations among $\sigma^\delta$-hypergeometric elements take constant values on $\Zero(I_\delta)$ and enable us to find $I_\delta$. We start with a definition.
\begin{define}
\label{DEF:hypergeometric}
  A nonzero element $P$ of $\ring/I_{\rm irr}$ is said to be $\sigma^\delta$-hypergeometric over $k$ if $P$ is invertible in $\ring/I_{\rm irr}$ and $\sigma^\delta(P)=rP$ for some $r\in k$.
\end{define}
Let $P_1,P_2$ be two $\sigma^\delta$-hypergeometric elements over $k$ of $\ring/I_{\rm irr}$. We say $P_1$ and $P_2$ are {\it similar} if there is $r\in k$ such that $P_1=rP_2$.
 \begin{prop}
 \label{PROP:hypergeometric}
   Let $B$ be an element of $\Zero(I_{\rm irr})\cap \GL_n(k)$ and $\chi$ a character of $H_{F,d}^\circ$ that is represented by an element of $\barQ[Y,1/\det(Y)]$. Then $\chi(B^{-1}Y)$ is a $\sigma^\delta$-hypergeometric element over $k$ of $\ring/I_{\rm irr}$. Furthermore, if $\chi_1$ and $\chi_2$ are two distinct characters, then $\chi_1(B^{-1}Y)$ and $\chi_2(B^{-1}Y)$ are not similar.
 \end{prop}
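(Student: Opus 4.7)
The plan is to leverage the $k$-algebra isomorphism
$$\phi \colon \ring/I_k(H) \;\xrightarrow{\;\sim\;}\; \ring/I_{\rm irr}, \qquad P(Y) \mapsto P(B^{-1}Y),$$
where $H := H_{F,\tilde d}^\circ$; this isomorphism is furnished by Corollary~\ref{COR:torsors2} together with the map~(\ref{EQ:map}). Under $\phi$, the representative $\chi(Y) \in \barQ[Y,1/\det(Y)]$ is sent to $\chi(B^{-1}Y)$. Since any character is invertible in $\barQ[Y,1/\det(Y)]/I_\barQ(H)$, extending scalars to $k$ yields invertibility of $\chi(Y)$ in $\ring/I_k(H)$, hence of $\chi(B^{-1}Y)$ in $\ring/I_{\rm irr}$.

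To compute $\sigma^\delta(\chi(B^{-1}Y))$, I would first iterate $\sigma(Y) = AY$ to get $\sigma^\delta(Y) = A_\delta Y$ with $A_\delta := \sigma^{\delta-1}(A)\cdots\sigma(A)A \in \GL_n(k)$, and use that the coefficients of $\chi$ lie in $\barQ$ (which is fixed by $\sigma$) to derive
$$\sigma^\delta\!\left(\chi(B^{-1}Y)\right) \;=\; \chi\!\left(\sigma^\delta(B^{-1})A_\delta Y\right) \;=\; \chi\!\left(M\cdot B^{-1}Y\right), \qquad M := \sigma^\delta(B)^{-1}A_\delta B \in \GL_n(k).$$
The crux is to show $M \in H(k)$. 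Because $I_{\rm irr}$ is $\sigma^\delta$-stable, $\sigma^\delta$ induces an automorphism of $\ring/I_{\rm irr}$, which transported through $\phi$ yields a $\sigma^\delta|_k$-semilinear automorphism $\tau$ of $\ring/I_k(H)$ satisfying $\tau(Y) = MY$. For any defining polynomial $P \in I_\barQ(H)$, whose coefficients are fixed by $\sigma^\delta$, well-definedness of $\tau$ forces $P(MY) \in I_k(H)$; i.e., $P(Mh)=0$ for every $h \in H(\bar k)$. Therefore $MH \subseteq H$, and so $M \in H(\bar k)\cap\GL_n(k) = H(k)$. Since $\chi$ is a character of $H$, the identity $\chi(Mh)=\chi(M)\chi(h)$ for $h \in H$ translates through $\phi$ to $\chi(MB^{-1}Y)=\chi(M)\chi(B^{-1}Y)$ in $\ring/I_{\rm irr}$, whence $\sigma^\delta(\chi(B^{-1}Y)) = \chi(M)\cdot\chi(B^{-1}Y)$ with $\chi(M) \in k$. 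This proves part~(1).

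For part~(2), I would argue by contradiction. Suppose $\chi_1(B^{-1}Y) = r\chi_2(B^{-1}Y)$ in $\ring/I_{\rm irr}$ for some $r \in k$; setting $\chi := \chi_1\chi_2^{-1}$ (a character of $H$) and applying $\phi^{-1}$ gives $\chi(Y) = r$ in $\ring/I_k(H)$, i.e., $\chi - r \in I_k(H)$. Evaluating this relation at the identity $e \in H(\barQ)$ forces $r = \chi(e) = 1$, and then $\chi - 1 \in I_k(H)$ shows $\chi \equiv 1$ on $H$; hence $\chi$ is the trivial character, contradicting $\chi_1 \neq \chi_2$.

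The principal obstacle I foresee is the verification that $M \in H$: one must carefully track how $\sigma^\delta$ simultaneously twists the coefficient field $k$ and the indeterminate matrix $Y$ (via $Y \mapsto A_\delta Y$), and crucially exploit that $H$ is defined over $\barQ$, so that the polynomials in $I_\barQ(H)$ are $\sigma^\delta$-invariant and thus land in $I_k(H)$ under $\tau$.
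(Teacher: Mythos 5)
Your proof is correct and follows essentially the same route as the paper's. The only difference is stylistic: you transport everything through the torsor isomorphism $\phi: \ring/I_k(H_{F,\tilde d}^\circ)\to\ring/I_{\rm irr}$ and argue $M\in H_{F,\tilde d}^\circ$ via $MH_{F,\tilde d}^\circ\subseteq H_{F,\tilde d}^\circ$, whereas the paper works directly in $I_{\rm irr}$ and evaluates $Q(\sigma^\delta(B^{-1})A_\delta Y)\in I_{\rm irr}$ at $Y=B$ to get the same membership in one step; the remaining steps (using the multiplicativity of $\chi$ to read off the certificate, and evaluating at the identity for the non-similarity claim) coincide.
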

\begin{proof}
    Obviously, $\chi(B^{-1}Y)$ is invertible in $\ring/I_{\rm irr}$. We first claim that $$\sigma^\delta(B^{-1})A_{\delta} B\in H_{F,\tilde{d}}^\circ(k).$$
    For any $Q\in I_{\barQ}(H_{F,\tilde{d}}^\circ)$, by (\ref{EQ:torsor3}), $Q(B^{-1}Y)\in I_{\rm irr}$.
    As $I_{\rm irr}$ is a $\sigma^\delta$-ideal, one has that $Q(\sigma^{\delta}(B^{-1})A_\delta Y)\in I_{\rm irr}$. Since $B\in \Zero(I_{\rm irr})$, $Q(\sigma^{\delta}(B^{-1})A_\delta B)=0$. This proves the claim.
    Now for any $h\in H_{F,\tilde{d}}^\circ(\bar{k})$,
    $$\chi(\sigma^\delta(B^{-1})A_\delta Bh)-\chi(\sigma^\delta(B^{-1})A_\delta B)\chi(B^{-1}Bh)=0.$$
    This implies that
    $$\chi(\sigma^\delta(B^{-1})A_\delta Y)-\chi(\sigma^\delta(B^{-1})A_\delta B)\chi(B^{-1}Y)\in I_{\rm irr}.$$
    In other words,
    $$\sigma^\delta(\chi(B^{-1}Y))-\chi(\sigma^\delta(B^{-1})A_\delta B)\chi(B^{-1}Y) \in I_{\rm irr},$$
    i.e. $\chi(B^{-1}Y)$ is a $\sigma^\delta$-hypergeometric element over $k$ of $\ring/I_{\rm irr}$. This proves the first assertion.

    Now assume that $\chi_1(B^{-1}Y)-r\chi_2(B^{-1}Y) \in I_{\rm irr}$ for some $r\in k$. Then for any $h\in H_{F,\tilde{d}}^\circ$,
    $$
      \chi_1(h)=\chi_1(B^{-1}Bh)=r\chi_2(B^{-1}Bh)=r\chi_2(h).
    $$
    Particularly, putting $h=I_n$, one then has that $r=1$.
    Thus $\chi_1=\chi_2$, a contradiction.
\end{proof}
Let $\kappa_2$ be as in (\ref{EQ:bounds1}).
Proposition B.17 of \cite{feng} states that $X(H_{F,\tilde{d}}^\circ)$ has generators that are represented by polynomials in $\barQ[Y]_{\leq \kappa_2}$. Denote
$$\H=\left\{P\in k[Y]_{\leq \kappa_2}\left| \begin{array}{c}\mbox{$P$ is $\sigma^\delta$-hypergeometric over $k$ in $\ring/I_{\rm irr}$},\\ P-rQ \notin I_{\rm irr}, \forall \,\,r\in k,\,\,\forall\,\,Q\in \H\setminus\{P\} \end{array}\right.\right\}$$
and
$$\X=\left\{P\in \barQ[Y]_{\leq \kappa_2}\left| \,\,\begin{array}{c} P\in X(H_{F,\tilde{d}}^\circ),\\
            P-Q \notin I_k(H_{F,\tilde{d}}^\circ),\,\,\forall\,\,Q\in \X\setminus\{P\}\end{array}\right.\right\}.$$
Then $\X$ is a set of generators of $X(H_{F,\tilde{d}}^\circ)$.
Furthermore, we have that
\begin{cor}
\label{COR:map}
  There is a bijective map between $\H$ and $\X$.
\end{cor}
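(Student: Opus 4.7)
The plan is to build the map $\Phi:\X\to\H$ by sending each character $\chi\in\X$ to the unique representative in $\H$ of the similarity class of $\chi(B^{-1}Y)$, where $B\in\Zero(I_{\rm irr})\cap\GL_n(k)$ is a fixed trivializing point of the $k$-torsor (its existence is supplied by Corollary~\ref{COR:torsors2}). The first task is to confirm $\Phi$ is well-defined. Since $B^{-1}\in\GL_n(k)$, the substitution $Y\mapsto B^{-1}Y$ carries $\barQ[Y]_{\leq\kappa_2}$ into $k[Y]_{\leq\kappa_2}$ while preserving total $Y$-degree, and Proposition~\ref{PROP:hypergeometric} shows that $\chi(B^{-1}Y)$ is $\sigma^\delta$-hypergeometric and invertible in $\ring/I_{\rm irr}$. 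Moreover, if $\chi_1-\chi_2\in I_k(H_{F,\tilde{d}}^\circ)$ then $\chi_1(B^{-1}Y)-\chi_2(B^{-1}Y)\in I_{\rm irr}$, via the $k$-algebra isomorphism (\ref{EQ:map}) applied to the prime ideal $I_{\rm irr}$; hence $\Phi$ is independent of the chosen polynomial representative of the class of $\chi$.

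Injectivity follows at once from the second assertion of Proposition~\ref{PROP:hypergeometric}: distinct characters in $X(H_{F,\tilde{d}}^\circ)$ produce hypergeometric elements of $\ring/I_{\rm irr}$ that are not $k$-similar, so they represent distinct elements of $\H$.

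For surjectivity, take any $P\in\H$. Since $P$ is invertible in $\ring/I_{\rm irr}$ and $\Zero(I_{\rm irr})$ is a trivial $k$-torsor for the connected group $H_{F,\tilde{d}}^\circ$, I would invoke the $\sigma^\delta$-analogue of Lemma~\ref{LM:invertibleelements}---whose argument rests only on Rosenlicht's theorem and the constants-coincidence $(\ring/I_{\rm irr})\cap\bar k=k$---to write $P\equiv r\,\chi(B^{-1}Y)\pmod{I_{\rm irr}}$ for some $r\in k$ and some character $\chi$ of $H_{F,\tilde{d}}^\circ$. Applying the substitution $Y\mapsto BY$ yields $P(BY)\equiv r\chi\pmod{I_k(H_{F,\tilde{d}}^\circ)}$; since $P\in k[Y]_{\leq\kappa_2}$ we have $\deg_Y P(BY)\leq\kappa_2$, exhibiting a degree-$\le\kappa_2$ representative of the class of $\chi$. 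Thus the class of $\chi$ belongs to $\X$, and $\Phi$ sends it to the class of $P$, proving surjectivity.

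The delicate step I expect is the extension of Lemma~\ref{LM:invertibleelements} from $\sigma$-ideals to the prime $\sigma^\delta$-ideal $I_{\rm irr}$. The argument transports almost verbatim once one knows $(\ring/I_{\rm irr})\cap\bar k=k$, which should follow from Lemma 1.19 of \cite{put-singer} applied to the $\sigma^\delta$-extension $\ring/I_{\rm irr}$ of $k$ (and the fact that $\sigma$ already fixes $\barQ$, so the $\sigma^\delta$- and $\sigma$-constants of $\bar k$ coincide). A secondary subtlety is verifying that the scalar $r$ in the Rosenlicht decomposition lies in $k$ rather than in $\bar k$, which again uses this constants-coincidence together with the fact that $\chi(B^{-1}Y)$ is a unit in $\ring/I_{\rm irr}$.
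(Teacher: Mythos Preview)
Your argument is correct and is essentially the paper's proof run in the opposite direction: the paper defines the inverse map $\tau:\H\to\X$, sending $P$ to the unique character $\chi$ with $P\equiv r\chi(B^{-1}Y)\pmod{I_{\rm irr}}$, and checks well-definedness, injectivity, and $\tau(\H)=\X$ using exactly the same ingredients you invoke (Proposition~\ref{PROP:hypergeometric}, Lemma~\ref{LM:invertibleelements}, and the isomorphism~(\ref{EQ:map})). One small point that both you and the paper leave implicit in the surjectivity step: the representative $P(BY)/r$ lies in $k[Y]_{\leq\kappa_2}$, whereas $\X$ demands a representative in $\barQ[Y]_{\leq\kappa_2}$; this is harmless because $I_k(H_{F,\tilde d}^\circ)=I_{\barQ}(H_{F,\tilde d}^\circ)\otimes_{\barQ}k$, so expanding $P(BY)/r$ in a $\barQ$-basis of $k$ containing $1$ and projecting onto the constant component produces the required $\barQ$-polynomial representative of $\chi$ of the same degree.
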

\begin{proof}
  Let $B\in \Zero(I_{\rm irr})\cap \GL_n(k)$. We define a map $\tau$ from $\H$ to $X(H_{F,\tilde{d}}^\circ)$ as follows: $\tau(P)=\chi$ where $\chi\in X(H_{F,\tilde{d}}^\circ)$ satisfies $P-r\chi(B^{-1}Y)\in I_{\rm irr}$ for some $r\in k$.
  By Proposition~\ref{PROP:torsors}, for each $P\in \H$, there is a character $\chi$ such that $\tau(P)=\chi$, and such character is unique by Proposition~\ref{PROP:hypergeometric}. Hence $\tau$ is well-defined. From the definition of $\H$, one sees that $\tau$ is injective. We shall prove that $\tau(\H)=\X$. As the map defined in (\ref{EQ:map}) is an isomorphism, one has that
  $$P(BY)/r-\chi(Y)\in I_k\left(H_{F,\tilde{d}}^\circ\right).$$ Hence $\chi(Y)$ can be chosen to be a polynomial in $\barQ[Y]_{\leq \kappa_2}$. That is, $\chi\in \X$. Therefore $\tau(\H)\subseteq \X$. Finally, Proposition~\ref{PROP:hypergeometric} implies that $\tau(\H)=\X$.
\end{proof}
Algorithm~\ref{ALG:hypergeometric} in Appendix B enables us to compute $\H$. Suppose that $\H=\{P_1,\cdots, P_\nu\}$. Let $b_j$ be the certifications of $P_j$, i.e. $\sigma^\delta(P_j)-b_jP_j\in I_{\rm irr}$ for all $1\leq j\leq \nu$. Set
$$
   \calZ=\left\{ (m_1,\cdots, m_\nu)\in \bZ^\nu \,\,\left|\,\,\exists \,\,f\in k^{\times}, \,\,s.t.\,\,\prod_{j=1}^\nu b_j^{m_j}=\frac{\sigma^\delta(f)}{f} \right. \right\}.
$$
$\calZ$ is a finitely generated $\bZ$-module. The solution of the problem $(P4)$ allows us to compute a set of generators of $\calZ$. Assume that $\bfm_1,\cdots, \bfm_\mu$ are generators of $\calZ$ and further suppose that
$$\prod_{j=1}^\nu b_j^{m_{i,j}}=\frac{\sigma^\delta(f_i)}{f_i}$$
 where $f_i\in k$ and $\bfm_i=(m_{i,1},\cdots,m_{i,\mu})$. For each $i=1,\cdots,\mu$, write $\bfm_i=\bfm_i^{+}-\bfm_i^{-}$, where $\bfm_i^{+}, \bfm_i^{-}$ are in $\bZ_{\geq 0}^\nu$ and $\bfm_i^{+}\left(\bfm_i^{-}\right)^T=0$. Denote by $\bP$ the vector $(P_1,\cdots,P_\mu)$ and $\bP^{\bfm}=\prod_{j=1}^{\nu}P_j^{m_j}$ where $\bfm=(m_1,\cdots,m_\nu)$. Let
 $$
    \P=\left\langle I_{\rm irr} \cup \left\{\left.\bP^{\bfm_i^{+}}-f_i \bP^{\bfm_i^{-}}\,\,\right |\,\,i=1,\cdots,\mu\right\}\right\rangle
 $$
It is easy to verify that $\P$ is a $\sigma^\delta$-ideal. Let $I_\delta$ be a maximal $\sigma^\delta$-ideal containing $\P$. Then
\begin{prop}
\label{PROP:maximalideal2}
 $ \Zero(\P)=\Zero(I_\delta)$, {\rm i.e.} $I_\delta=\sqrt{\P}$.
\end{prop}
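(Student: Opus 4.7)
The plan is to establish both inclusions $\Zero(\P) \subseteq \Zero(I_\delta)$ and $\Zero(I_\delta) \subseteq \Zero(\P)$ by parametrising each set inside the ambient torsor $\Zero(I_{\rm irr}) = BH_{F,\tilde{d}}^\circ(\bar{k})$ supplied by Corollary~\ref{COR:torsors2}, and then matching the two resulting subgroups of $H_{F,\tilde{d}}^\circ$. The inclusion $\Zero(I_\delta) \subseteq \Zero(\P)$ is immediate from the hypothesis $\P \subseteq I_\delta$, so the work lies entirely in showing $\Zero(\P) \subseteq \Zero(I_\delta)$.

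First I would pin down $\Zero(\P)$ using Corollary~\ref{COR:map}: fixing the Rosenlicht-type normal form $P_j \equiv r_j\,\tau(P_j)(B^{-1}Y)\pmod{I_{\rm irr}}$ with $r_j \in k$, a direct evaluation at $Z = Bh$ for $h \in H_{F,\tilde{d}}^\circ(\bar{k})$ yields
\[
\bP^{\bfm_i^+}(Bh) - f_i\,\bP^{\bfm_i^-}(Bh) \;=\; \bP^{\bfm_i^-}(Bh)\bigl(r_i\,\bar{\chi}_{\bfm_i}(h) - f_i\bigr),
\]
where $r_i = \prod_j r_j^{m_{i,j}}$ and $\bar{\chi}_{\bfm_i} = \prod_j \tau(P_j)^{m_{i,j}} \in X(H_{F,\tilde{d}}^\circ)$. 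Invertibility of each $P_j$ modulo $I_{\rm irr}$ ensures $\bP^{\bfm_i^-}(Bh) \neq 0$. Specialising $h = 1$, together with $B \in \Zero(I_\delta) \subseteq \Zero(\P)$, forces $f_i = r_i$, so that membership in $\Zero(\P)$ is equivalent to the character conditions $\bar{\chi}_{\bfm_i}(h) = 1$ for all $i$. Hence $\Zero(\P) = BG'(\bar{k})$ with $G' = \bigcap_{i=1}^\mu \ker(\bar{\chi}_{\bfm_i})$, and the inclusion $G_\delta \leq G'$ drops out by repeating the same identity at points of $\Zero(I_\delta)$.

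For the reverse containment $G' \leq G_\delta = \stab(I_\delta)$, I would apply Proposition~\ref{PROP:kernels} (using that $H_{F,\tilde{d}}^\circ$ is a proto-group of $G_\delta$, as established just before the statement of the proposition) to write $G_\delta = \bigcap_{\bar{\chi} \in M_0} \ker(\bar{\chi})$ where $M_0 \subseteq X(H_{F,\tilde{d}}^\circ)$ consists of all characters trivial on $G_\delta$. It then suffices to prove that the image of $\calZ$ under $\bfn \mapsto \bar{\chi}_\bfn = \prod_j \tau(P_j)^{n_j}$ already coincides with $M_0$. Given $\bar{\chi} = \prod_j \tau(P_j)^{n_j} \in M_0$, the function $\bar{\chi}(B^{-1}Y) - 1$ vanishes on $\Zero(I_\delta)$ and hence lies in $I_\delta$; clearing the Rosenlicht denominator produces $\bP^{\bfn^+} - r\,\bP^{\bfn^-} \in I_\delta$ with $r = \prod_j r_j^{n_j}$. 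Applying $\sigma^\delta$, using $\sigma^\delta(\bP^{\bfn^\pm}) \equiv \bigl(\prod_j b_j^{n_j^\pm}\bigr)\bP^{\bfn^\pm} \pmod{I_{\rm irr}}$, substituting back the previous congruence, and exploiting that $\bP^{\bfn^-}$ is a unit modulo $I_\delta$ while $I_\delta \cap k = (0)$, one extracts the scalar identity $\sigma^\delta(r)/r = \prod_j b_j^{n_j}$, i.e.\ $\bfn \in \calZ$. Therefore $\bar{\chi}$ lies in the subgroup of $X(H_{F,\tilde{d}}^\circ)$ generated by $\bar{\chi}_{\bfm_1},\ldots,\bar{\chi}_{\bfm_\mu}$, which yields $G' \leq G_\delta$, and combining with the opposite inclusion gives $\Zero(\P) = BG_\delta(\bar{k}) = \Zero(I_\delta)$.

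The main obstacle I expect is precisely this descent in the last paragraph: converting the ideal-theoretic vanishing $\bar{\chi}(B^{-1}Y) - 1 \in I_\delta$ into the base-field identity defining $\calZ$. This step makes essential use of the maximality of $I_\delta$ (so that $\ring/I_\delta$ is an integral domain with constant field $\barQ$, forcing $\sigma^\delta$-invariant ratios down to constants) and of the fact $I_\delta \cap k = (0)$. By contrast, the torsor parametrisation and the direction $G_\delta \leq G'$ are mechanical consequences of the normal form in Corollary~\ref{COR:map} and of $B$ already lying in $\Zero(I_\delta)$.
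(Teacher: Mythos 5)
Your proof is correct and follows essentially the same route as the paper's: both hinge on applying Proposition~\ref{PROP:kernels} to cut out $G_\delta$ by characters of $H_{F,\tilde{d}}^\circ$ and then using the $\sigma^\delta$-cocycle trick (apply $\sigma^\delta$ to $\bar{\chi}(B^{-1}Y)-1\in I_\delta$, clear the Rosenlicht factors, use $I_\delta\cap k=(0)$) to show the exponent vector of each such character lies in $\calZ$. Your repackaging --- first identifying $\Zero(\P)=BG'(\bar{k})$ with $G'=\bigcap_i\ker\bar{\chi}_{\bfm_i}$ and then proving $G'\leq G_\delta$ --- is a cosmetic restructuring of the paper's direct verification that $\bar{\chi}(B^{-1}Z)=1$ for all $Z\in\Zero(\P)$.
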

\begin{proof}
  Let $B$ be an element of $\Zero(I_\delta)\cap \GL_n(k)$ and $G_\delta=\stab(I_\delta)$. Then due to Proposition~\ref{PROP:torsors},
  $$\Zero(I_\delta)=BG_\delta(\bar{k}).$$
  The discussion after Lemma~\ref{LM:intersection} states that $H_{F,\tilde{d}}^\circ$ is a proto-group of $G_\delta$.
  By Proposition~\ref{PROP:kernels}, $G_\delta$ is the intersection of kernels of some characters of $H_{F,\tilde{d}}^\circ$. Let $\Lambda$ be the set of these characters. Observe that $\X$ is a set of generators of $X(H_{F,\tilde{d}}^\circ)$. Suppose that $\bar{\chi}\in \Lambda$. Then
  \begin{equation}
    \label{EQ:formula1}\bar{\chi}=\prod_{i=1}^{\nu} \tau(P_i)^{\alpha_i},
    \end{equation}
  where $\alpha_i\in \bZ$ and $\tau$ is defined as in Corollary~\ref{COR:map}. By Corollary~\ref{COR:map},
for each $i=1,\cdots,\nu$, there is $r_i\in k$ such that
\begin{equation}
 \label{EQ:formula2}
 \tau(P_i)(B^{-1}Y)-r_i P_i \in I_{\rm irr}.
 \end{equation}
 Lemma~\ref{LM:maximalideal} implies that $\bar{\chi}(B^{-1}Y)-1\in I_\delta$. Denote by $\bar{Y}$ the image of $Y$ in $\ring/I_\delta$. Then $\bar{\chi}(B^{-1}\bar{Y})-1=0$. This together with (\ref{EQ:formula1}) and (\ref{EQ:formula2}) induces that
 \begin{equation}
 \label{EQ:formula3}
    \prod_{i=1}^{\nu} r_i^{\alpha_i}P_i^{\alpha_i}(\bar{Y})-1=0.
 \end{equation}
 Applying $\sigma^\delta$ to (\ref{EQ:formula3}), one has that
 \begin{equation}
 \label{EQ:formula4}
    \prod_{i=1}^{\nu} \sigma^\delta\left(r_i^{\alpha_i}\right)b_i^{\alpha_i}P_i^{\alpha_i}(\bar{Y})-1=0.
 \end{equation}
 Combining (\ref{EQ:formula3}) and (\ref{EQ:formula4}), one has that
 $$
     \prod_{i=1}^\nu b_i^{\alpha_i}=\prod_{i=1}^\nu \frac{\sigma^\delta\left(r_i^{-\alpha_i}\right)}{r_i^{-\alpha_i}}.
 $$
 Set $\bfalpha=(\alpha_1,\cdots,\alpha_\nu) \in \bZ^{\nu}$. Then $\bfalpha\in \calZ$. So there are integers $z_1,\cdots, z_\mu$ such that
 $\bfalpha=z_1 \bfm_1+\cdots+z_\mu \bfm_\mu$.

  Let $Z$ be an element of $\Zero(\P)$. Then one has that $\bP^{\bfm_i}(Z)=f_i$ for all $1\leq i \leq \mu$, because $\bP^{\bfm_i^{-}}(Z)\neq 0$.
  By (\ref{EQ:formula1}) and (\ref{EQ:formula2}) again,
 \begin{align*}
   \bar{\chi}(B^{-1}Z)-1&=\prod_{i=1}^\nu \tau(P_i)^{\alpha_i}(B^{-1}Z)-1=\bP^{\bfalpha}(Z)\prod_{i=1}^\nu r_i^{\alpha_i}-1\\ &=\prod_{i=1}^{\mu}\bP^{z_i\bfm_i}(Z)\prod_{i=1}^\nu r_i^{\alpha_i} -1
   =\prod_{i=1}^{\mu}f_i^{z_i}\prod_{i=1}^\nu r_i^{\alpha_i}-1.\\
 \end{align*}
 This implies that the polynomial $\bar{\chi}(B^{-1}X)-1$ takes a constant value on $\Zero(\P)$. Particularly, putting $Z=B$, one has that $\bar{\chi}(B^{-1}B)-1=\prod_{i=1}^{\mu}f_i^{z_i}\prod_{i=1}^\nu r_i^{\alpha_i}-1=0$. In the sequel,  $\bar{\chi}(B^{-1}Z)-1=0$ for all $Z\in \Zero(\P)$. Therefore
 $$
   \Zero(\P)\subseteq \Zero(I_{\rm irr}\cup \{\bar{\chi}(B^{-1}Y)-1\,\,|\,\,\bar{\chi}\in \Lambda\}).
 $$
The former set contains $\Zero(I_\delta)$ and the latter one is equal to $\Zero(I_\delta)$ by Lemma~\ref{LM:maximalideal}. Consequently, $\Zero(\P)=\Zero(I_\delta)$.
\end{proof}
Suppose that $\P$ has been calculated. One can then compute $\sqrt{\P}$ by the methods developed in (\cite{eisenbud-huneke-vasconcelos}, Section 8.7 of \cite{becker-Weispfenning}) and $I=\sqrt{\P}\cap\sigma(\sqrt{\P})\cap\cdots\cap\sigma^{\delta-1}(\sqrt{\P})$ by the algorithm presented in (Section 6.3, page 260 of \cite{becker-Weispfenning}). Then the ideal $I$ is a maximal $\sigma$-ideal by Lemma~\ref{LM:intersection}.
\begin{example}
\label{EXM:example3} (Example~\ref{EXM:example2} continued) We have the following irreducible decomposition:
\begin{align*}
   I_{\bfZ,2}=&\langle y_{1,1}, y_{1,2}, y_{2,2}, y_{2,3}, y_{3,1}, y_{3,3}\rangle\cap \langle y_{1,1}, y_{1,3}, y_{2,1}, y_{2,2}, y_{3,2}, y_{3,3}\rangle \\&
   \cap \langle y_{1,2}, y_{1,3}, y_{2,1}, y_{2,3}, y_{3,1}, y_{3,2}\rangle .
\end{align*}
Set $I_{\rm irr}=\langle y_{1,1}, y_{1,2}, y_{2,2}, y_{2,3}, y_{3,1}, y_{3,3}\rangle$. Then one can easily verify that $I_{\rm irr}$ is a $\sigma^3$-ideal and
$$
  \stab(I_{\rm irr})=\left\{\left.\begin{pmatrix}\alpha & 0 & 0 \\ 0 & \beta & 0 \\ 0& 0 & \gamma\end{pmatrix}\right| \alpha\beta\gamma\neq 0\right\}.
$$
The group of characters of $X(\stab(I_{\rm irr}))$ is generated by $y_{1,1},y_{2,2},y_{3,3}$. Thus we only need to compute $\sigma^3$-hypergeometric elements in $\ring/I_{\rm irr}$ which are represented by linear polynomials in $k[Y]$. By Algorithm~\ref{ALG:hypergeometric}, we have that $y_{1,3}, y_{2,1}, y_{3,2}$ are $\sigma^3$-hypermetric elements of $\ring/I_{\rm irr}$ and further they are not similar in pair. Precisely,
$$
   \sigma^3(y_{1,3})=(x+2)y_{1,3},\,\,\sigma^3(y_{2,1})=x y_{2,1}\,\,\sigma^3(y_{3,2})=(x+1)y_{3,2}.
$$
An easy calculation implies that the only element $(m_1,m_2,m_3)$ in $\bZ^3$ such that $$x^{m_1}(x+1)^{m_2}(x+2)^{m_3}=\sigma^3(f)/f$$ for some $f\in k$ is $(0,0,0)$. This implies that $I_{\rm irr}$ is a maximal $\sigma^3$-ideal.
\end{example}
\section{The algorithm and an example}
We are now ready to present the algorithm for computing the Galois group $\stab(I)$, where $I$ is a maximal $\sigma$-ideal of $\ring$.
\begin{algorithm}
   Input: linear difference equations of the form (\ref{EQ:differenceeqns}).\\
   Output: the Galois group of (\ref{EQ:differenceeqns}) over $k$.
\begin{itemize}
   \item [$(\rmnum{1})$] Compute a proto-maximal ideal $I_{F,\tilde{d}}$ by Algorithm~\ref{ALG:partialrelations}.
   \item [$(\rmnum{2})$] Using algorithms for the problem $(P2)$, compute an associated prime of $I_{F,\tilde{d}}$, denoted by $I_{\rm irr}$. Compute a positive integer $\delta$ such that $I_{\rm irr}$ is a $\sigma^{\delta}$-ideal.
   \item [$(\rmnum{3})$] By Algorithm~\ref{ALG:hypergeometric}, compute $\sigma^\delta$-hypergeometric elements in $\ring/I_{\rm irr}$ that are represented by polynomials in $k[Y]_{\leq \kappa_2}$, and are not similar in pair. Denote them by $P_1,\cdots, P_\nu$.
   \item [$(\rmnum{4})$] Let $b_i$ be the certificates of $P_i$, i.e. $\sigma^{\delta}(P_i)-b_i P_i \in I_{\rm irr}$ where $b_i\in k$ and $i=1,\cdots, \nu$. Using the method for the problem $(P3)$, compute a set of generators of the following $\bZ$-module
       $$
         \calZ=\left\{(z_1,\cdots,z_\nu)\in \bZ^{\nu}\,\,\left|\,\,\exists\,\,f\in k^{\times}, s.t.\,\, \prod_{i=0}^\nu b_i^{z_i}=\frac{\sigma^\delta(f)}{f}\right.\right\}.
       $$
       Denote those generators by $\bfm_1,\cdots, \bfm_\mu$.
   \item [$(\rmnum{5})$] Set $\bP=(P_1,\cdots, P_\nu)$ and find $f_i$, the element in $k$ satisfying $\bP^{\bfm_i}=\sigma^{\delta}(f_i)/f_i$ where $i=1,\cdots,\nu$. Set
   $$
      \P=I_{\rm irr}\cup \left\{\left.\bP^{\bfm_i^{+}}-f_i\bP^{\bfm_i^{-}}\,\,\right|\,\, i=1,\cdots,\mu\right\},
   $$
   where $\bfm_i^{+}, \bfm_i^{-}$ are elements in $\bZ_{\geq 0}^\nu$ satisfying $\bfm_i^{+}-\bfm_i^{-}=\bfm_i$ and $\bfm_i^{+}\left(\bfm_i^{-}\right)^T=0$.
   \item [$(\rmnum{6})$]
      By the algorithms for the problem $(P1)$ and the algorithm presented in (Section 6.3, page 260 of \cite{becker-Weispfenning}), compute $\sqrt{\P}$ and
      $$
         I=\sqrt{\P}\bigcap \sigma\left(\sqrt{\P}\right)\bigcap \cdots \bigcap \sigma^{\delta-1}\left(\sqrt{\P}\right).
      $$
   \item [$(\rmnum{7})$]
      Return $\stab(I)$.
\end{itemize}
\end{algorithm}
The correctness of the algorithm comes from the results presented in the previous sections.
\begin{remark}
    One may suspect that the complexity of the algorithm would be very high, since the integers $\tilde{d}$ and $\kappa_2$ given in (\ref{EQ:bounds1}) and (\ref{EQ:bounds2}) are quite large. These integers guarantee the terminate of the algorithm. However, as shown in Examples~\ref{EXM:example2} and \ref{EXM:example3}, these integers may be much larger than those required in practice.
\end{remark}
In the following, we give an example to illustrate the algorithm.
\begin{example}
\label{EXM:example4}
Consider the following linear difference equations
\begin{equation}
\label{EQ:example4}
\sigma\begin{pmatrix}y_1\\y_2\\y_3\end{pmatrix}=\begin{pmatrix}0 & 1 & 0 \\ x & 0 & 0 \\0 & 0& \frac{1}{x}\end{pmatrix}\begin{pmatrix}y_1\\y_2\\y_3\end{pmatrix}.
\end{equation}
\begin{itemize}
\item [$(\rmnum{1})$]
Using the method developed in Section 3.1, we compute an ideal $\tilde{I}$ generated by polynomials in $I_{F,2}\cap \barQ[Y]$:
\begin{align*}
   \tilde{I}=\langle y_{3,2}, y_{3,1}, y_{2,3}, y_{2,1}y_{2,2}, y_{1,3}, y_{1,2}y_{2,2}, y_{1,1}y_{2,1}, y_{1,1}y_{1,2}\rangle.
\end{align*}
$\tilde{I}$ is a $\sigma$-ideal and
$$
   \stab(\tilde{I})=\left\{\left.\begin{pmatrix}\alpha & 0 & 0 \\ 0 & \beta & 0 \\ 0& 0 & \gamma\end{pmatrix}\right| \alpha\beta\gamma\neq 0\right\}\bigcup \left\{\left.\begin{pmatrix}0 & \alpha & 0 \\ \beta & 0& 0 \\ 0 & 0 & \gamma \end{pmatrix}\right| \alpha\beta\gamma\neq 0\right\}.
$$
As all elements in $\stab(\tilde{I})$ are semi-simple, $\stab(\tilde{I})$ is a proto-maximal $\sigma$-ideal and thus $\tilde{I}$ is a proto-maximal $\sigma$-ideal.
\item [$(\rmnum{2})$]
$\tilde{I}$ is radical and one can compute its irreducible decomposition as follows:
$$
   \tilde{I}=\langle y_{1,1}, y_{1,3}, y_{2,2}, y_{2,3}, y_{3,1}, y_{3,2}\rangle \cap \langle y_{1,2}, y_{1,3}, y_{2,1}, y_{2,3}, y_{3,1}, y_{3,2}\rangle.
$$
Set $I_{\rm irr}=\langle y_{1,1}, y_{1,3}, y_{2,2}, y_{2,3}, y_{3,1}, y_{3,2}\rangle$. Then $I_{\rm irr}$ is a $\sigma^2$-ideal and
$$\stab(I_{\rm irr})=\{\diag(\alpha,\beta,\gamma)|\alpha\beta\gamma\neq 0\}.$$
\item [$(\rmnum{3})$]
Observe that the group of characters of $\stab(I_{\rm irr})$ is generated by linear polynomials. Using Algorithm~\ref{ALG:hypergeometric}, we can find that $\sigma^2$-hypergeometric elements of $\ring/_{\rm irr}$ that are represented by linear polynomials in $k[Y]$ are $y_{1,2}, y_{2,1}, y_{3,3}$. Precisely,
$$
   \sigma^2(y_{1,2})=x y_{1,2}, \,\,\sigma^2(y_{2,1})=(x+1)y_{2,1},\,\,\sigma^2(y_{3,3})=\frac{1}{x(x+1)}y_{3,3}.
$$
\item [$(\rmnum{4})$]
 Set
 $$\calZ=\left\{(z_1,z_2,z_3)\in \bZ^3\,\,\left|\,\,\exists\,\,f\in k^{\times}, s.t.\,\, x^{z_1}(x+1)^{z_2}\left(\frac{1}{x(x+1)}\right)^{z_3}=\frac{\sigma^2(f)}{f}\right.\right\}.$$
 One sees that $\calZ$ is generated by $(1,1,1)$.
 \item [$(\rmnum{5})$]
   Let $\P=\langle I_{\rm irr}\cup \{y_{1,2}y_{2,1}y_{3,3}-1\}\rangle$. One has that $\P$ is a radical ideal and thus is a maximal $\sigma^2$-ideal.
 \item [$(\rmnum{6})$]
     Compute $I=\P \cap \sigma(\P)$. One has that
     \begin{align*}
         I=\langle &y_{3,2}, y_{3,1}, y_{2,3}, y_{2,2}y_{2,1}, y_{1,3}, y_{2,2}y_{1,2}, y_{1,2}y_{2,1}^2 y_{3,3}-y_{2,1}, y_{1,2}^2 y_{2,1}y_{3,3}-y_{1,2}, \\ & y_{1,2}y_{2,1}y_{3,3}+y_{1,1}y_{2,2}y_{3,3}-1, y_{1,1}y_{2,1}, y_{1,1}y_{1,2}\rangle.
     \end{align*}
 \item [$(\rmnum{7})$] Using the Gr\"{o}bner base computation, we have that
 $$
    \stab(I)=\left\{\left.\begin{pmatrix}\alpha & 0 & 0 \\ 0 & \beta & 0 \\ 0& 0 & \gamma\end{pmatrix}\right| \alpha\beta\gamma=1\right\}\bigcup \left\{\left.\begin{pmatrix}0 & \alpha & 0 \\ \beta & 0& 0 \\ 0 & 0 & \gamma \end{pmatrix}\right| \alpha\beta\gamma=1\right\}.
 $$
\end{itemize}
\end{example}
\begin{appendix}
\section{Coefficient bounds for generators of $I_{F,d}$}
Note that $I_{F,d}$ is generated by
$$
  \bS=\{P(Y)\in k[Y]_{\leq d}\,\,|\,\, P(F)=0\},
$$
which is a $k$-vector space of finite dimension. We are going to find coefficient bounds for $\bS$. Precisely, we shall find an integer $\ell$ such that there is a basis of $\bS$ satisfying that coefficients of elements in this basis are of degree $\leq \ell$. Let $N={d+n^2\choose d}$ and $M_1,\cdots, M_N$ be the monomials in entries of $F$ with degrees not greater than $d$. Observe that for a basis of $\bS$, it suffices to find a basis of the following vector space
$$
  \left\{(a_1,\cdots,a_N)\in k^N\,\,\left|\,\,\sum_{i=1}^N a_iM_i=0\right.\right \}.
$$
Furthermore, one sees that $(M_1,\cdots, M_N)^T$ is a solution of linear difference equations, which can be constructed from (\ref{EQ:differenceeqns}). Hence our original problem can be reduced to the following.
\begin{problem}
Assume that $\bfv=(v_1,\cdots, v_n)^T$ is a nonzero solution of (\ref{EQ:differenceeqns}), where the $v_i$ are in some Picard-Vessiot extension ring of $k$. Set
$$
   W=\{(a_1,a_2,\cdots, a_n)\in k^n \,\,|\,\, a_1v_1+\cdots+a_nv_n=0\}.
$$
Find an integer $\ell$ depending on $n$ and $A$, such that $W$ has a basis consisting of vectors whose entries are of degree not greater than $\ell$.
\end{problem}
Without loss of generality, we may assume that $v_1,\cdots,v_r$ are linearly independent over $k$ and
$$
    v_{r+i}=c_{i,1}v_1+\cdots+c_{i,r}v_r, \,\,i=1,\cdots,n-r.
$$
For all $i$ with $1\leq i \leq n-r$, denote $\bfc_i=(c_{i,1},\cdots,c_{i,2},\cdots,c_{i,n})$ where $c_{i,r+i}=-1$ and $c_{i,r+j}=0$ for $1\leq j \leq n-r$ and $j\neq i$. Then $\{\bfc_1,\cdots,\bfc_{n-r}\}$ is a basis of $W$. Actually, for any $\bfa=(a_1,\cdots, a_n)\in W$, we have that $\bfa=-(a_{r+1}\bfc_1+\cdots+a_n\bfc_{n-r}).$ In the following, we are going to find a bound for $\deg(c_{i,j})$, where $i=1,\cdots,n-r, j=1,\cdots,r$. Let $V$ be the solution space of (\ref{EQ:differenceeqns}) and
$$
   \tilde{V}=\{\bfw\in V\,\,|\,\,\bfc_i \bfw^T=0,\,\,\forall \,\,i=1,\cdots, n-r \}.
$$
Then $\tilde{V}$ is a $\barQ$-vector space of finite dimension. Moreover, we have
\begin{lemma}
\label{LM:dimension}
$\dim(\tilde{V})=r$.
\end{lemma}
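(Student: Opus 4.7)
My plan is to establish the equality by matching bounds.

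Upper bound $\dim \tilde V \le r$: I project $\tilde V$ onto the first $r$ coordinates. The relations $w_{r+i} = \sum_j c_{i,j} w_j$ make the projection $\pi: \tilde V \to R^r$, $\bfw \mapsto (w_1, \ldots, w_r)^T$, injective. Using the standard fact that $\barQ$-linearly independent solutions in $V$ remain $R$-linearly independent (if $F\bfa_1, \ldots, F\bfa_s$ admit a nontrivial $R$-relation, so do $\bfa_1,\ldots,\bfa_s$, contradicting their $\barQ$-independence), the images in $R^r$ are $R$-linearly independent, so at most $r$ of them can occur.

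Lower bound $\dim \tilde V \ge r$ via reduction to a smaller system: I define $A^* \in \Mat_{r\times r}(k)$ by $A^*_{il} = A_{il} + \sum_{s=1}^{n-r} A_{i,r+s}\, c_{s,l}$. One checks that $\bfv^* := (v_1,\ldots,v_r)^T$ satisfies $\sigma\bfv^* = A^*\bfv^*$, and the $k$-linear independence of $v_1,\ldots,v_r$ forces $A^* \in \GL_r(k)$ (a $k$-linear dependence among the rows of $A^*$ would translate into one among the $v_i$'s). A short computation comparing two expressions for $\sigma(v_{r+t})$ yields the compatibility $\sum_j \sigma(c_{t,j}) A^*_{jl} = A^*_{r+t,l}$, which ensures that $\pi$ restricts to a $\barQ$-linear bijection between $\tilde V$ and $V^*(R) := \{\bfu \in R^r : \sigma\bfu = A^*\bfu\}$, with inverse $\bfw^* \mapsto (\bfw^*, \sum_j c_{i,j} w^*_j)$. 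It suffices to show $\dim_{\barQ} V^*(R) \ge r$.

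The main obstacle is this last step, which I handle via Galois descent for the Picard-Vessiot extension $R/k$. Let $G = \gal(R/k)$ and $N := R \cdot (G\bfv^*) \subseteq R^r$; since $A^* \in \GL_r(k)$, each $\phi(\bfv^*)$ lies in $V^*(R)$, so $N$ is a $G$-stable $R$-submodule of finite rank. By the standard PV descent theorem, $N = R \otimes_k N^G$ with $N^G \subseteq k^r$ of some $k$-dimension $m$. I choose a $k$-basis $u_1,\ldots,u_m$ of $N^G$ and write $\bfv^* = U\bfc$ with $U \in \Mat_{r\times m}(k)$ of rank $m$ and $\bfc \in R^m$. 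If $m < r$, the rows of $U$ would be $k$-linearly dependent, yielding $\bm\mu \in k^r \setminus \{0\}$ with $\sum_i \mu_i v_i = \sum_j c_j (\sum_i \mu_i U_{i,j}) = 0$, contradicting the $k$-linear independence of $v_1,\ldots,v_r$. Hence $m = r$, $N = R^r$, and the $\barQ$-to-$R$ equivalence for solutions gives $\dim_{\barQ}(\barQ\text{-span of } G\bfv^*) = \rank_R N = r$, so $\dim V^*(R) \ge r$ as required.
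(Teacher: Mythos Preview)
Your proof is correct and reaches the same conclusion, but the route differs from the paper's in two respects.

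For the lower bound, the paper stays with the full vector $\bfv\in R^n$: it takes a $\barQ$-basis $\bfv_1,\dots,\bfv_\mu$ of the Galois orbit span, observes that a suitable $\mu\times\mu$ minor $D$ is invertible in the total ring of fractions $K$, and checks that $UD^{-1}$ is $G$-invariant, hence lies in $k^{(n-\mu)\times\mu}$. The rows of $(UD^{-1},I_{n-\mu})$ then land in $W$, forcing $n-r\ge n-\mu$. You instead first pass to the auxiliary $r\times r$ system $\sigma Y=A^*Y$, identify $\tilde V$ with $V^*(R)$, and run a module-theoretic descent on $N=R\cdot(G\bfv^*)$. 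This buys you a cleaner endgame (you get $\mu=r$ directly rather than $\mu\ge r$), at the cost of setting up $A^*$ and verifying the compatibility you state. Note that your formula $\sum_j\sigma(c_{t,j})A^*_{jl}=A^*_{r+t,l}$ uses an index $r+t$ outside the declared $r\times r$ range of $A^*$; you are tacitly extending the defining formula of $A^*_{il}$ to $i>r$, which is fine but should be said.

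One point deserves a sharper citation: the claim $N=R\otimes_k N^G$ for a $G$-stable $R$-submodule $N\subseteq R^r$ is true, but in the difference setting $R$ is only a ring (possibly with zero divisors), so this is Hopf--Galois (faithfully flat) descent rather than the field-theoretic version one usually quotes. Since the torsor is trivial here, $k\to R$ is faithfully flat and the descent goes through; alternatively, the paper's $UD^{-1}$ computation proves exactly the special case you need, without invoking the general theorem. Your upper bound via injectivity of $\pi$ and the $R$-linear independence of $\barQ$-independent solutions is a clean replacement for the paper's terse dimension count $\dim(\tilde V)+(n-r)\le n$.
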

\begin{proof}
Clearly, $\bfv\in \tilde{V}$. Suppose that $\{\bfv_1,\cdots,\bfv_\mu\}$ is a basis of the vector space over $\barQ$ spanned by the orbit of $\bfv$ under the action of $\gal(K/k)$, the Galois group of (\ref{EQ:differenceeqns}), where $K$ is the ring of fractions of the Picard Vessoit extension of $k$ for (\ref{EQ:differenceeqns}). Then $\bfv_i\in \tilde{V}$ for all $i$ with $1\leq i \leq \mu$. In the sequel, $\dim(\tilde{V})\geq \mu$. In the following, we shall prove that $\mu\geq r$.
Denote the matrix consisting of the first $\mu$ rows of $(\bfv_1,\cdots,\bfv_\mu)$ by $D$ and the remaining one by $U$. For any $\phi \in \gal(K/k)$, there is $[\phi]\in \GL_{\mu}(\barQ)$ such that $\phi(D)=D[\phi]$ and $\phi(U)=U[\phi]$.
Without loss of generality, we may assume that $\det(D)\neq 0$. As for any $\phi\in\gal(K/k)$, $\phi(\det(D))=\det(D)\det([\phi])$. One sees that $\det(D)$ is invertible in $K$ and therefore $D$ is invertible. Now for any $\phi\in\gal(K/k)$,
$$\phi(UD^{-1})=U[\phi][\phi]^{-1}D^{-1}=UD^{-1}. $$
The Galois theory implies that $C=UD^{-1}\in k^{(n-\mu)\times \mu}.$ Set $\tilde{C}=(C, I_{n-\mu})$. Then
$$\tilde{C}\begin{pmatrix} D\\ U\end{pmatrix}=0.$$
Particularly, $\tilde{C}\bfv=0$. This implies that $\dim(W)=n-r\geq n-\mu$ and then $\mu\geq r$. So $\dim(\tilde{V})\geq r$. On the other hand, one has that $\dim(\tilde{V})+n-r\leq n$ and then $\dim(\tilde{V})\leq r$. Hence $\dim(\tilde{V})=r$.
\end{proof}
Assume that $\{\bfv_1=\bfv, \bfv_2,\cdots, \bfv_r\}$ is a basis of $\tilde{V}$ and $M$ is the $n\times r$ matrix consisting of the vectors $\bfv_1,\cdots,\bfv_r$. For $1\leq i_1<\cdots<i_r\leq n$, denote the determinant of
the sub-matrix consisting of the $i_1$-th,$i_2$-th, $\cdots$, $i_r$-th rows of $M$ by $d_{i_1,i_2,\cdots, i_r}$. Then an easy calculation implies that
$$
   d_{i_1,i_2,\cdots, i_r}=b_{i_1,i_2,\cdots, i_r}d_{1,2,\cdots, r}, \mbox{where $b_{i_1,i_2,\cdots, i_r}\in k$}.
$$
In particular,
$b_{1,2,\cdots,j-1,j+1,\cdots,r,r+i}=(-1)^{r-j}c_{i,j}.$
Let $\bfb=(b_{1,2,\cdots,r},\cdots,b_{n-r+1,n-r+2,\cdots,n})^T$.
On the other hand, one can construct from $A$ an invertible matrix $\tilde{A}_r$ with entries in $k$ such that $\bfb d_{1,2,\cdots,r}$ is a solution of $\sigma(Y)=\tilde{A}_r Y$. Notice that the matrix $\tilde{A}_r$ only depends on $A$ and $r$. Moreover, one can easily verify that $d_{1,2,\cdots,r}$ is hypergeometric over $k$. This implies that $\bfb d_{1,2,\cdots,r}$ is a hypergeometric solution. By means of cyclic vector, the system of the form (\ref{EQ:differenceeqns}) can be reduced into a scale linear difference equation. Then algorithms developed in (\cite{cluzeau-van-hoeij, petkosevek}) allow us to find all hypergeometric solutions of (\ref{EQ:differenceeqns}).
Therefore one can find an integer $\ell/2$ such that hypergeometric solutions of $\sigma(Y)=\tilde{A}_rY$ are of the form $\bfw h$ where $h$ is hypergeometric over $k$ and $\bfw$ is a vector whose entries are elements in $k$ with degree not greater than $\ell/2$. Particularly, $\bfb d_{1,2,\cdots,r}=\bar{\bfw}\bar{h}$ where $\bar{\bfw}=(\bar{w}_1,\cdots, \bar{w}_n)\in k^n$ satisfying $\deg(\bar{w}_i)\leq \ell/2$ and $\bar{h}$ is hypgeometric over $k$. Observe that $b_{1,2,\cdots,r}=1$. Then one has that $\bfb=\bar{\bfw}/\bar{w}_1$. Hence entries of $\bfb$ are of degree $\leq \ell$, i.e. $\deg(c_{i,j})\leq \ell$.

In the case that we do not know the dimension of $\tilde{V}$, we can take $r=1,2,\cdots,n$ and construct the corresponding systems $\sigma(Y)=\tilde{A}_1 Y, \cdots, \sigma(Y)=\tilde{A}_n Y$ respectively. Compute all hypergeometric solutions of these systems and let $\ell/2$ be an integer such that these hypergeometric solutions are of the form $\bfw h$ where $h$ is hypergeometric over $k$ and $\bfw$ is a vector whose entries are rational functions in $x$ with degrees not greater than $\ell/2$. Then we have that
$\deg(c_{i,j})\leq \ell$. This solves Problem A.1.

\section{$\sigma^\delta$-Hypergeometric elements}
We shall describe a method to compute $\sigma^\delta$-hypergeometric elements in $\ring/I_{\rm irr}$. In fact, we are not going to calculate all $\sigma^\delta$-hypergeometric elements in $\ring/I_{\rm irr}$. Instead, we only find those $\sigma^\delta$-hypergeometric elements that are represented by polynomials in $k[Y]$ with degrees not greater than $d$ and furthermore that are not similar in pair.
Assume that $\bfm_1,\cdots, \bfm_\ell$ are polynomials in $k[Y]_{\leq d}$ satisfying that $\{\bar{\bfm}_1,\cdots,\bar{\bfm}_\ell\}$ is a $k$-basis of $k[Y]_{\leq d}/(I_{\rm irr}\cap k[Y]_{\leq d})$, where $\bar{\bfm}_i$ is the image of $\bfm_i$. By the Gr\"{o}bner base computation, one can find these $\bfm_i$. As $\sigma^\delta$ preserves the degrees of elements of $k[Y]$, there is $\tilde{A}\in \GL_\ell(k)$ such that
\begin{equation*}
   \sigma^\delta((\bar{\bfm}_1,\bar{\bfm}_2,\cdots, \bar{\bfm}_\ell))=(\bar{\bfm}_1,\bar{\bfm}_2,\cdots,\bar{\bfm}_\ell)\tilde{A}.
\end{equation*}
The invertible matrix $\tilde{A}$ can be constructed from $A$.
Now suppose that $P=\sum c_i \bfm_i$ is a $\sigma^\delta$-hypergeometric element, where $c_i\in k$, i.e. $\sigma^\delta(P)-rP\in I_{\rm irr}$ for some $r\in k$. Then one can verify that $c_1,\cdots,c_\ell$ and $r$ satisfying
$$
\tilde{A}\sigma^\delta \begin{pmatrix}c_1\\ \vdots \\ c_\ell\end{pmatrix}=r\begin{pmatrix}c_1\\ \vdots \\ c_\ell\end{pmatrix}.
$$
Let $h$ be a $\sigma^\delta$-hypergeometric element satisfying $\sigma^\delta(h)=rh$. Then $(c_1,\cdots,c_\ell)^T h$ is
a $\sigma^\delta$-hypergeometric solutions of the following linear difference equations
\begin{equation}
\label{EQ:deltadifferenceeqns}
\sigma^\delta(Y)=\tilde{A}^{-1}Y.
\end{equation}
Consequently, for those $c_1,\cdots,c_\ell$ and $r$, it suffices to find all $\sigma^\delta$-hypergeometric solutions of the above linear difference equations. The algorithms for computing all $\sigma^\delta$-hypergeometric solutions of (\ref{EQ:deltadifferenceeqns}) can be found at (\cite{cluzeau-van-hoeij, petkosevek}). Particularly, one can find $\sigma^\delta$-hypergeometric solutions $\bfc_1 h_1,\cdots, \bfc_l h_l$ that are not similar in pair where $h_1,\cdots,h_l$ are $\sigma^\delta$-hypergeometric and $\bfc_1,\cdots,\bfc_l$ are vectors with entries in $k$. Here two vectors $\bfh_1, \bfh_2$ are said to be similar if $\bfh_1=r\bfh_2$ for some $r\in k^{\times}$. Furthermore, if $\bfh$ is a $\sigma^\delta$-hypergeometric solution of (\ref{EQ:deltadifferenceeqns}), then there is a unique $j$ with $1\leq j \leq l$ satisfying $\bfh=b\bfc_j h_j$ for some $b\in k$. Write $\bfc_i=(c_{i,1},\cdots,c_{i,\ell})$ and set $P_i=\sum_{j=1}^{\ell} c_{i,j}\bfm_j$, where $i=1,2,\cdots,l$. Then $\sigma^\delta(P_i)-r_iP_i\in I_{\rm irr}$ for some $r_i\in k$. It remains to select those $P_i$ that are invertible in $\ring/I_{\rm irr}$. Note that $P_i$ is invertible in $\ring/I_{\rm irr}$ if and only if $\Zero(P_i)\cap \Zero(I_{\rm irr})=\emptyset$. The latter condition can be detected by the Gr\"{o}bner base computation. Precisely, it suffices to decide if $1$ is in the ideal $\langle I_{\rm irr}, P_i\rangle$. The previous results are summarized in the following algorithm.
\begin{algorithm}
\label{ALG:hypergeometric}
Compute all $\sigma^\delta$-hypergeometric elements in $\ring/I_{\rm irr}$ that are represented by polynomials in $k[Y]_{\leq d}$ and are not similar in pair.
\begin{itemize}
\item [$(a)$] Compute a Gr\"{o}bner basis for $I_{\rm irr}\cap k[Y]$ and then find the monomials $\bfm_1,\cdots, \bfm_\ell$ in $k[Y]_{\leq d}$ such that $\{\bar{\bfm}_1,\cdots,\bar{\bfm}_\ell\}$ is a $k$-basis of $k[Y]_{\leq d}/(I_{\rm irr}\cap k[Y])_{\leq d}$, where $\bar{\bfm}_i$ denotes the image of $\bfm_i$ in $\ring/I_{\rm irr}$.
\item [$(b)$] Construct an invertible matrix $\tilde{A}\in \GL_\ell(k)$ such that
   $$
     \sigma^\delta((\bar{\bfm}_1,\bar{\bfm}_2,\cdots, \bar{\bfm}_\ell))=(\bar{\bfm}_1,\bar{\bfm}_2,\cdots,\bar{\bfm}_\ell)\tilde{A}.
   $$
\item [$(c)$] Compute $\sigma^\delta$-hypergeometric elements
  $$
     \sigma^\delta(Y)=\tilde{A}^{-1}Y
  $$
  that are not similar in pair, say $\bfc_1 h_1,\cdots, \bfc_l h_l$, where $h_1,\cdots, h_l$ are $\sigma^\delta$-hypergeometric and $\bfc_1,\cdots,\bfc_l$ are vectors with entries in $k$.
\item [$(d)$]
   Write $\bfc_i=(c_{i,1},\cdots,c_{i,\ell})$ and set $P_i=\sum_{j=1}^\ell c_{i,j}\bfm_j$, where $i=1,2,\cdots,l$.
\item [$(e)$] Decide whether $k[Y,z]=\langle I_{\rm irr}\cap k[Y], P_i, \det(Y)z-1\rangle$ by the Gr\"{o}bner base computation. Return those $P_i$ satisfying $\langle I_{\rm irr}\cap k[Y], P_i, \det(Y)z-1\rangle=k[Y,z]$.
\end{itemize}
\end{algorithm}
\end{appendix}

\end{document}